\newcommand*{\mailto}[1]{\href{mailto:#1}{\nolinkurl{#1}}}
\newtheorem{theorem}{Theorem}[section]
\newtheorem{definition}{Definition}[section]
\newtheorem{lemma}[theorem]{Lemma}
\newtheorem{remark}[theorem]{Remark}
\newcommand{\R}{{\mathbb R}}
\newcommand{\N}{{\mathbb N}}
\newcommand{\Z}{{\mathbb Z}}
\newcommand{\be}{\begin{equation}}
\newcommand{\ee}{\end{equation}}
\newcommand{\bea}{\begin{eqnarray}}
\newcommand{\eea}{\end{eqnarray}}
\newcommand{\I}{\mathrm{i}}
\newcommand{\E}{\mathrm{e}}
\newcommand{\re}{\mathrm{Re}}
\newcommand{\cn}{\mathrm{cn}}
\newcommand{\sn}{\mathrm{sn}}
\newcommand{\dn}{\mathrm{dn}}
\newcommand{\sech}{\mathrm{sech}}
\numberwithin{equation}{section}
\begin{document}

\title[Nonlinear differential identities for cnoidal waves]{Nonlinear differential identities for cnoidal waves}

\author[Michael Leitner]{Michael Leitner}
\address[Michael Leitner]{Forschungs-Neutronenquelle Heinz Maier-Leibnitz (FRM II)\\ Technische Universit\"at M\"unchen\\
Lichtenbergstrasse 1\\ 85748 Garching\\ Germany}
\email{michael.leitner@frm2.tum.de}

\author[Alice Mikikits-Leitner]{Alice Mikikits-Leitner}
\address[Alice Mikikits-Leitner]{Zentrum f\"ur Mathematik\\ Technische Universit\"at M\"unchen\\
Boltzmannstrasse 3 \\ 85748 Garching\\ Germany}
\email[Corresponding author]{mikikits@ma.tum.de}

\keywords{nonlinear wave equations; periodic solutions; cnoidal waves; Korteweg--de Vries equation; Kawahara equation}
\subjclass[2010]{35Q53, 35B10}

\date{\today}

\begin{abstract}
This article presents a family of nonlinear differential identities for the spatially periodic function $u_s(x)$, which is essentially the Jacobian elliptic function $\cn^2(z;m(s))$ with one non-trivial parameter $s$. 
More precisely, we show that this function $u_s$ fulfills equations of the form
\begin{equation*}
\big(u_s^{(\alpha)}u_s^{(\beta)}\big)(x)=\sum_{n=0}^{2+\alpha+\beta}b_{\alpha,\beta}(n)u_s^{(n)}(x)+c_{\alpha,\beta},
\end{equation*}
for any $s>0$ and for all $\alpha,\beta\in\N_0$. 
We give explicit expressions for the coefficients $b_{\alpha,\beta}(n)$ and $c_{\alpha,\beta}$ for given $s$.

Moreover, we show that for any $s$ satisfying $\sinh(\pi/(2s))\geq 1$ the set of functions $\{1,u^{\vphantom{a}}_s,u'_s,u''_s,\dots\}$ constitutes a basis for $L^2(0,2\pi)$.
By virtue of our formulas the problem of finding a periodic solution to any nonlinear wave equation reduces to a problem in the coefficients. A finite ansatz exactly solves the KdV equation (giving the well-known cnoidal wave solution) and the Kawahara equation. An infinite ansatz is expected to be especially efficient if the equation to be solved can be considered a perturbation of the KdV equation.
\end{abstract}

\maketitle

\section{Introduction} \label{sec:intro}

Deriving solutions for integrable equations is a well-studied problem~\cite{gesztesy2003sea}. The used techniques stem from algebraic geometry (compact Riemann surfaces) and spectral analysis, relying heavily on the integrability of the underlying system. Historically the starting point was the development of the inverse scattering transform~\cite{gardner1974korteweg} and the Lax formalism~\cite{lax1968integrals} for the Korteweg--de Vries (KdV) equation, followed soon after by extensions to other integrable equations~\cite{ablowitz1974inverse, zakharov1974complete,zakharov1972exact,zakharov1979integration}.
These inverse spectral methods allow to deduce the existence of spatially localized solutions (solitons). A considerable breakthrough was made by extending the available tools to include periodic and certain classes of quasi-periodic solutions of integrable systems~\cite{dubrovin1975periodic, dubrovin1975inverse,its1975hill,its1975schroedinger,mckean1975spectrum}.

A significant simplification of the problem can be attained by considering only travelling-wave solutions, that is to transform the generic nonlinear 1+1-dimensional wave equation 
\begin{equation*}
N(v,v_t,v_x,\dots)=0
\end{equation*}
via a travelling wave ansatz 
\begin{equation*}
 v=f(\xi), \qquad \xi=kx-\omega t.
\end{equation*}
into a nonlinear ordinary differential equation. 

Generally, it has been found for integrable systems that solitary wave solutions take the form of hyperbolic functions, while their periodic travelling wave solutions can be written in terms of elliptic functions (e.g.~Jacobian elliptic functions $\cn$, $\sn$, $\dn$, Jacobian theta functions $\theta_i$ with $i\in\{0,1,2,3\}$, and the Weierstra\ss\ function $\wp$), from which the hyperbolic function solutions follow in the appropriate limit. In fact, these cnoidal wave solutions can actually be expressed as infinite sum of equally spaced solitary waves~\cite{whitham1984comments,boyd1984cnoidal}. 

The aim of this article is to propose a general ansatz for constructing periodic travelling-wave solutions of non-integrable nonlinear wave equations.
We will show that for a specific one-parameter family of functions $u_s$ the space spanned by the set
\begin{equation*}
 \{ 1, u_s, u_s', u_s'',\dots\}
\end{equation*}
is closed with respect to point-wise multiplication (i.e., it forms an algebra). Obviously, it is also closed with respect to differentiation. As a consequence, such an ansatz seems promising for solving ordinary differential equations where the nonlinear terms correspond to a polynomial of the respective derivatives. Probably the simplest non-trivial equation fulfilling these requirements is the KdV equation, and indeed our $u_s$ is nothing else than the periodic travelling wave solution of this equation, which is expressible in terms of the Jacobian elliptic function $\cn$. 

At the core of our method lie hitherto unknown nonlinear differential identities of elliptic functions
\begin{equation} \label{equ:introdiff}
\big(u_s^{(\alpha)}u_s^{(\beta)}\big)(x)=\sum_{n=0}^{2+\alpha+\beta}b_{\alpha,\beta}(n)u_s^{(n)}(x)+c_{\alpha,\beta},
\end{equation}
where $u_s^{(n)}\equiv d^nu_s/dx^n$ and $\alpha,\beta\in \N_0$. The coefficients $b_{\alpha,\beta}(n)$ and $c_{\alpha,\beta}$ depend on the parameter $s$ and are explicitly determined in terms of the Bernoulli numbers and infinite series of hyperbolic functions. Note that $u_s$ being a solution of the (integrated) KdV equation follows from \eqref{equ:introdiff} as a corollary by setting $\alpha=\beta=0$. That is, $v(x,t)=u(x-b_{0,0}(0)t)$ solves the KdV equation $v_t+2vv_x-b_{0,0}(2)v_{xxx}=0$, where the coefficients $b_{0,0}(0)$ and $b_{0,0}(2)$ will be specified later, see Section~\ref{subsec:specialcases}. Moreover, we use \eqref{equ:introdiff} also for obtaining an exact solution for the Kawahara equation, which is in general non-integrable. 

Additionally, we show that for $s$ small enough the closure of the vector space spanned by our proposed ansatz is actually the whole space of $2\pi$-periodic square integrable functions. Therefore, our approach is general and equivalent to a Fourier ansatz both in terms of solution set ($L^2(0,2\pi)$) and tools (Eqs.~\eqref{equ:introdiff}/trigonometric identities). Its great advantage lies in the additional freedom in the parameter $s$: In the limit of large $s$ the KdV solution $u_s$ evolves to a periodic sequence of separated solitary waves, and it can be expected that already a linear combination of its first few derivatives will constitute a very good (and in special cases even exact) approximation to the solution of a perturbation of the KdV equation, while for a Fourier ansatz the number of necessary terms would diverge. On the other hand, in the small $s$-limit $u_s$ essentially converges to $\cos$, and \eqref{equ:introdiff} correspond to the trigonometric identities (Remark~\ref{rem:trigon}).

We admit that a disadvantage of our proposed expansion compared to a Fourier ansatz lies in the fact that it is not orthogonal. However, by subjecting it to Gram-Schmidt-orthogonalization this could be easily remedied, with all above statements staying valid, at the expense of more complicated expressions for the coefficients $b_{\alpha,\beta}(n)$ and $c_{\alpha,\beta}$ in \eqref{equ:introdiff}.

Previous methods for solving general wave equations often employ a polynomial ansatz
\begin{equation} \label{equ:ansatzofothers}
 f(\xi)=\sum_{i=0}^N a_i u^i(\xi),
\end{equation}
with the building block $u(\xi)$ chosen as the explicitely known solution of some related equation. Here either a hyperbolic function such as $\tanh$~\cite{parkes1996automated,fan2000extended} or $\sech$~\cite{duffy1996travelling} for deriving solitary wave solutions, or an elliptic function such as $\sn$ and $\cn$~\cite{fu2001new,liu2001jacobi,peng2003exact} or $\wp$~\cite{kano1981exact,krishnan1984exact,kudryashov2004nonlinear} for periodic solutions have been used. 
Note, however, that this approach should be seen as purely heuristic, as it does not guarantee the expressibility of general solutions in the basis set, in contrast to our approach. Also, it can become quite tedious to express the resulting terms of products of derivatives by common linearly independent basis functions so that the coefficients can actually be compared. 

In the following we give a definition of our fundamental basis function $u_s$ and additional forms of representation.
\begin{definition} \label{def:us}
The $2\pi$-periodic function $u_s$ has the following equivalent representations: it can be expressed 
\begin{itemize}
\item[a)] as \emph{Fourier series}, i.e.,
\begin{equation} \label{equ:introsol}
u_s(x)=\sum_{k\in \Z} \frac{k}{\sinh(k\pi/s)}\E^{-\I kx},
\end{equation}
where the term corresponding to $k=0$ is to be understood as $s/\pi$ in the sense of de~l'Hospital,
\item[b)] as \emph{infinite sum of equally spaced solitary waves}, i.e.,
\begin{equation} \label{equ:solu2}
u_s(x)=\frac{s^2}{2}\sum_{n\in\Z} \sech^2\big( \frac{s}{2}(x-2\pi n)\big)
\end{equation}
or
\item[c)] in terms of \emph{elliptic functions}, i.e.,
\begin{equation} \label{equ:solu3}
u_s(x)=\frac{s}{\pi}+\big[\frac{2K(m)^2(1-m)}{\pi^2} -\frac{2K(m)E(m)}{\pi^2}\big]-\frac{2 mK(m)^2}{\pi^2}\cn^2\big( \frac{K(m)}{\pi}x;m\big),
\end{equation}
where $\cn(z;m)$ is a Jacobian elliptic function. Here $m$ denotes the elliptic modulus related to the parameter $s$ via the relation 
\begin{equation} \label{equ:relationsm}
s=\frac{K(m)}{K(1-m)},
\end{equation}
and $K(m)$ denotes the complete elliptic integral of the first kind, i.e.~$K(m)=\int_0^{\pi/2}ds/\sqrt{1-m\sin s}$. Moreover, $E(m)$ is the complete elliptic integral of the second kind, i.e.~$E(m)=\int_0^{\pi/2}ds\sqrt{1-m\sin s}$.
\end{itemize}
\end{definition}

A periodic solution of the KdV equation in terms of the Jacobian elliptic function $\cn$, see~\eqref{equ:solu3}, has been already derived by Korteweg and de~Vries~\cite{korteweg1895change} and is known as \emph{cnoidal wave}, having three independent parameters. Later Whitham~\cite{whitham1984comments} and Boyd~\cite{boyd1984cnoidal} derived independently that a KdV cnoidal wave solution can be expressed as infinite sum of equally spaced solitary waves, see~\eqref{equ:solu2}. While Whitham used the method of partial fraction decomposition for hyperbolic functions to directly verify that this is a KdV solution, Boyd just made use of the Fourier series of the elliptic function $\cn$, see~\eqref{equ:introsol}, and subsequently applied Poisson summation as detailed in Appendix~\ref{app:poissonsum} to get the representation~\eqref{equ:solu2}.

To be specific, consider the function $u_s$ given by~\eqref{equ:introsol}. The representation~\eqref{equ:solu2} follows from~\eqref{equ:poissonsum} with $T=2\pi$ and
\begin{equation*}
F(\omega)=\frac{\omega}{\sinh(\omega \pi/s)},\qquad \mathcal{F}^{-1}(F)(x)=\frac{s^2}{4\pi}\sech^2\big( \frac{sx}{2}\big).
\end{equation*}
To show statement c) use the Fourier series of the elliptic function $\dn^2$, see~\cite[p.~25]{oberhettinger1973methods},
\begin{equation*}
\frac{K^2(m)}{2\pi^2}\dn^2\big( \frac{K(m)}{\pi}x;m\big)-\frac{K(m)E(m)}{2\pi^2}=\sum_{k=1}^{\infty} \frac{kq^k}{1-q^{2k}}\cos(kx),\quad q=\E^{-\pi K(1-m)/K(m)},
\end{equation*}
Setting $s=K(m)/K(1-m)$ and noting that $\dn^2=(1-m)+m\,\cn^2$, see e.g.~\cite[p.~20]{byrd1971handbook}, then yields~\eqref{equ:solu3}.

\section{Main results} \label{sec:results}
The main purpose of this section is to present a family of identities that allow to write any product of derivatives of the function  $u_s$ given by Definition~\ref{def:us} as a finite linear combination of its derivatives.
The main effort will lie in computing the discrete convolution of specific sequences (see~Theorem~\ref{thm:result1}), corresponding to the Fourier coefficients of arbitrary derivatives of $u_s$. These will allow us to directly obtain the differential identities of Theorem~\ref{thm:result3}. 
Finally, we show that for any parameter $s$ satisfying the constraint $\sinh(\pi/(2s))\geq 1$ the functions $u_s^{(n)}(x)$ with $n\in \N_0$ together with the identity function actually constitute a basis of $L^2(0,2\pi)$.
The usefulness of our results in the context of nonlinear wave equations will be demonstrated in Section~\ref{sec:appl}.

\begin{theorem}[{\bf Discrete convolution formulas}] \label{thm:result1}
Let $\alpha,\beta\in \N_0$ and $s\in (0,\infty)$. Then we have
\begin{equation} \label{equ:polynomial}
\sum_{k\in \Z} \frac{k^{1+\alpha}(k+j)^{1+\beta}}{\sinh (k\pi /s)\sinh ((k+j)\pi /s)}= \sum_{n=0}^{2+\alpha+\beta} {a}_{\alpha, \beta}(n) \frac{j^{1+n}}{\sinh (j\pi /s)},\quad j\neq 0.
\end{equation}
In the case $n=2+\alpha+\beta$ the coefficients ${a}_{\alpha, \beta}(n)$ are given by
\begin{equation} \label{equ:fouriercoeff1a}
a_{\alpha, \beta}(2+\alpha+\beta)=
2(-1)^{\alpha}\frac{(1+\alpha)!(1+\beta)!}{(3+\alpha+\beta)!},
\end{equation}
the term for $n=1+\alpha+\beta$ vanishes, and for $n\leq \alpha+\beta$ we have
\begin{align} \label{equ:fouriercoeff2a}
a_{\alpha, \beta}(n)=&
(-1)^{1+\alpha+\beta}\Big( \binom{1+\alpha}{1+n} +(-1)^{n}\binom{1+\beta}{1+n}\Big)e_{2+\alpha+\beta-n}\\
&+(-1)^{\alpha}\frac{s}{\pi}\delta_{\alpha,n}\delta_{\beta,0}+\frac{s}{\pi}\delta_{\beta,n}\delta_{\alpha,0}\nonumber
\end{align}
where $\delta_{i,j}$ denotes the Kronecker delta, i.e.~$\delta_{i,j}=1$ for $i=j$ and $\delta_{i,j}=0$ else, and the quantities $e_{\ell}$ ($\ell \geq 2$) are defined by
\begin{equation} \label{equ:eell}
e_{\ell}=(1+(-1)^{\ell})\Big(\frac{B_{\ell}}{\ell}+2\sum_{k=1}^{\infty} \frac{k^{\ell-1}}{1-\E^{2\pi k/s}}\Big),
\end{equation}
where $B_{\ell}$ denotes the ${\ell}$th Bernoulli number, see Remark~\ref{rem:bernoullinumbers}.
\end{theorem}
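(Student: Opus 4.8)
The plan is to regard the left-hand side of \eqref{equ:polynomial} as a meromorphic function $G(j)$ of $j$ and to show that $P(j):=\sinh(j\pi/s)\,G(j)$ is a polynomial in $j$ whose coefficients can be read off directly. The entry point is the elementary partial-fraction identity $\frac{1}{\sinh a\,\sinh b}=\frac{\coth a-\coth b}{\sinh(b-a)}$, applied with $a=k\pi/s$ and $b=(k+j)\pi/s$; its merit is that $\sinh(b-a)=\sinh(j\pi/s)$ is independent of the summation index $k$ and factors out. This turns the statement into
\begin{equation*}
P(j)=\sum_{k\in\Z}k^{1+\alpha}(k+j)^{1+\beta}\big(\coth(k\pi/s)-\coth((k+j)\pi/s)\big),
\end{equation*}
an absolutely convergent sum (the bracket decays exponentially), in which the two indices $k=0$ and $k=-j$, where a $\coth$ is singular against a vanishing polynomial factor, are read in the de~l'Hospital sense of Definition~\ref{def:us}. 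It then suffices to identify $P$ with $\sum_{n=0}^{2+\alpha+\beta}a_{\alpha,\beta}(n)\,j^{1+n}$. Throughout I take $j>0$, the case $j<0$ following from the symmetry noted below.

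The decisive step is to write $\coth(x)=\mathrm{sgn}(x)+r(x)$, where $r(x)=\coth(x)-\mathrm{sgn}(x)$ is odd and decays exponentially, splitting $P=P_{\mathrm{sgn}}+P_r$ (plus the two de~l'Hospital terms). In $P_{\mathrm{sgn}}$ the difference $\mathrm{sgn}(k)-\mathrm{sgn}(k+j)$ vanishes except on the finite window $-j<k<0$, where it equals $-2$; hence
\begin{equation*}
P_{\mathrm{sgn}}(j)=2(-1)^{\alpha}\sum_{m=1}^{j-1}m^{1+\alpha}(j-m)^{1+\beta}
\end{equation*}
is a genuinely finite sum, which Faulhaber's formula turns into a polynomial in $j$ of degree $3+\alpha+\beta$ with Bernoulli-number coefficients. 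In $P_r$ the exponential decay of $r$ legitimises reindexing the $r((k+j)\pi/s)$ part by $k\mapsto k-j$, collapsing $P_r$ to $\sum_{k\in\Z}Q(k)\,r(k\pi/s)$ with $Q(k)=k^{1+\alpha}(k+j)^{1+\beta}-(k-j)^{1+\alpha}k^{1+\beta}$, a polynomial of degree $1+\alpha+\beta$ in $k$. Since $r$ is odd, only the odd part of $Q$ contributes, and $P_r$ reduces to the series $\sum_{k\ge1}k^{p}\,r(k\pi/s)=2\sum_{k\ge1}k^{p}/(\E^{2k\pi/s}-1)$, which are exactly the Lambert-type sums appearing in the definition \eqref{equ:eell} of $e_\ell$.

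I would then expand $Q$ by the binomial theorem, regroup $P_{\mathrm{sgn}}$ and $P_r$ by powers $j^{1+n}$, and observe that the Bernoulli contribution from the Faulhaber expansion of $P_{\mathrm{sgn}}$ and the Lambert contribution from $P_r$ assemble into precisely the quantity $e_{2+\alpha+\beta-n}$, carrying the prefactor $\binom{1+\alpha}{1+n}+(-1)^{n}\binom{1+\beta}{1+n}$ of \eqref{equ:fouriercoeff2a}. The leading coefficient originates solely in $P_{\mathrm{sgn}}$: the top coefficient of the polynomial $\sum_{m=1}^{j-1}m^{1+\alpha}(j-m)^{1+\beta}$ is $\int_0^1 t^{1+\alpha}(1-t)^{1+\beta}\,dt=\frac{(1+\alpha)!(1+\beta)!}{(3+\alpha+\beta)!}$, giving \eqref{equ:fouriercoeff1a}. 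The two singular indices $k=0$ and $k=-j$ contribute, through the de~l'Hospital value $k^{1+\gamma}\coth(k\pi/s)\to\frac{s}{\pi}\delta_{\gamma,0}$ (as $k\to0$), exactly the Kronecker corrections $\frac{s}{\pi}\delta_{\beta,n}\delta_{\alpha,0}$ and $(-1)^{\alpha}\frac{s}{\pi}\delta_{\alpha,n}\delta_{\beta,0}$. Finally, the vanishing of the $n=1+\alpha+\beta$ term---and of every coefficient of the wrong parity, matching $e_\ell=0$ for odd $\ell$---follows from symmetry: the substitution $k\mapsto-k$ gives $G(-j)=(-1)^{\alpha+\beta}G(j)$, whence $P(-j)=(-1)^{1+\alpha+\beta}P(j)$, so only powers $j^{1+n}$ with $n\equiv\alpha+\beta\pmod 2$ survive.

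The main obstacle is the bookkeeping of the assembly step rather than any single estimate. One must check that the Bernoulli numbers delivered by Faulhaber's formula for the finite sum $P_{\mathrm{sgn}}$ attach to the same binomial coefficients and the same index $2+\alpha+\beta-n$ as the Lambert series coming from $P_r$, so that the two merge into a single $e_{2+\alpha+\beta-n}$ instead of surviving separately---getting every sign, factorial and binomial coefficient to line up (and using the parity constraint to reconcile the two seemingly different binomial prefactors) is where the real labour lies. A secondary delicate point is to keep $P_{\mathrm{sgn}}$ as a bona fide finite sum and to treat the indices $k=0,-j$ individually, since a careless manipulation of the divergent $\mathrm{sgn}$-parts would discard precisely the boundary contributions that generate the top powers of $j$.
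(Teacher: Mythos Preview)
Your approach is correct and, at its core, coincides with the paper's: both split the sum into a finite polynomial piece on the window $-j<k<0$ (your $P_{\mathrm{sgn}}$, the paper's $S_2$) evaluated via Faulhaber's formula, and exponentially decaying tails (your $P_r$, the paper's $S_1$) yielding the Lambert series that enter $e_\ell$. Your entry through the identity $\tfrac{1}{\sinh a\,\sinh b}=\tfrac{\coth a-\coth b}{\sinh(b-a)}$ together with the split $\coth=\mathrm{sgn}+r$ is a cleaner and better-motivated route to that decomposition than the paper's explicit partition of $\Z$ into four regions $L_O,L_I,R_I,R_O$ with auxiliary functions $h_1,\dots,h_4$; the two are literally equivalent once one notes $\tfrac{1}{1-\E^{\pm 2x}}=\tfrac12(1\mp\coth x)$. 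Your Beta-integral argument for the leading coefficient is likewise more transparent than the paper's derivation. The one place the paper goes further than your sketch is exactly the assembly step you flag as the main obstacle: the paper expands the Faulhaber sum, recognises the inner sum over the binomial index as a terminating ${}_2F_1$ at $z=1$, and closes it via Gauss's theorem $F(a,b;c;1)=(c-a)_{-b}/(c)_{-b}$, which is what delivers the closed-form prefactor $\binom{1+\alpha}{1+n}+(-1)^n\binom{1+\beta}{1+n}$ and verifies the vanishing at $n=1+\alpha+\beta$; you will need an equivalent combinatorial identity to finish.
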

The quantities $a_{\alpha,\beta}(n)$ given by~\eqref{equ:fouriercoeff2a} are zero if $\alpha+\beta+n$ is odd and they fulfill the property
\begin{equation} \label{equ:symmetrycoeffa}
a_{\alpha, \beta}(n)=(-1)^{\alpha+\beta}a_{\beta,\alpha}(n).
\end{equation}

We think that besides being preparatory work for Theorem~\ref{thm:result3}, the discrete convolution formulas given in Theorem~\ref{thm:result1} are interesting in their own right. We found no such general result in standard tables~\cite{gradstejn1980table,hansen1975table}. 
Note, however, that in a letter to Hardy \cite[p.~XXVI,~VI~(8)]{ramanujan1927collected} Ramanujan stated the related formula
\begin{equation} \label{equ:ramanujan}
\sum_{k=1}^{\infty}\frac{k^{4m}}{\sinh^2(k\pi)}=-\frac{1}{2\pi}B_{4m}-\frac{4m}{\pi}\sum_{k=1}^{\infty}\frac{k^{4m-1}}{1-\E^{2\pi k}},\qquad m\in\N,
\end{equation}
corresponding to the special case of $s=1$ but with $j=0$, being outside the range of validity of Theorem~\ref{thm:result1}. This formula was first proven by Preece~\cite{preece1928theorems} using the residue theorem. 

\begin{remark} \label{rem:bernoullinumbers}
The (first) Bernoulli numbers are defined recursively via, see \cite{abramowitz1964handbook},
\begin{equation*}
B_0=1, \quad B_1=-\frac{1}{2}, \quad B_{\ell}=-\sum_{k=0}^{{\ell}-1}\binom{{\ell}}{k}\frac{B_k}{{\ell}-k+1} \quad \text{for ${\ell}\geq 2$}.
\end{equation*}
Hence
\begin{equation*}
B_0=1, \quad B_1=-1/2,\quad B_2=1/6, \quad B_4=-1/30, \quad B_6=1/42, \quad B_8=-1/30, \dots,
\end{equation*}
while the Bernoulli numbers with odd index larger than one vanish, i.e.~$B_{2n+1}=0$ for $n\in \N$.

Additionally, one defines the second Bernoulli numbers by $B_{\ell}^*=B_{\ell}$ for ${\ell}\neq1$ and $B_1^*=-B_1$.
\end{remark}

The following theorem presents a family of identities involving products of derivatives of the elliptic function $u_s$.
\begin{theorem}[{\bf Nonlinear differential identities}] \label{thm:result3}
Let $\alpha,\beta\in \N_0$, $s\in (0,\infty)$, and $u_s$ be the $2\pi$-periodic function given by Definition~\ref{def:us}.
Then the nonlinear differential equations
\begin{equation} \label{equ:nonlinpde}
\Big(u_s^{(\alpha)}u_s^{(\beta)}\Big)(x)=\sum_{n=0}^{2+\alpha+\beta}b_{\alpha,\beta}(n)u_s^{(n)}(x)+c_{\alpha,\beta},
\end{equation}
where $u_s^{(n)}\equiv d^nu_s/dx^n$, are fulfilled. The coefficients ${b}_{\alpha,\beta}(n)$ are given by
\begin{equation}
 {b}_{\alpha,\beta}(n)=(-1)^{\frac{\alpha-\beta+n}{2}}{a}_{\alpha,\beta}(n),
\end{equation}
with ${a}_{\alpha,\beta}(n)$ determined by~\eqref{equ:fouriercoeff1a} and~\eqref{equ:fouriercoeff2a}.
That is, in the case $n=2+\alpha+\beta$ we have
\begin{equation} \label{equ:fouriercoeff1}
b_{\alpha, \beta}(2+\alpha+\beta)=-2\frac{(1+\alpha)!(1+\beta)!}{(3+\alpha+\beta)!},
\end{equation}
the term for $n=1+\alpha+\beta$ vanishes, and for $n\leq \alpha+\beta$ we have
\begin{align} \label{equ:fouriercoeff2}
b_{\alpha, \beta}(n)=&
(-1)^{1+\frac{\beta-\alpha+n}{2}}\Big( \binom{1+\alpha}{1+n} +(-1)^{n}\binom{1+\beta}{1+n}\Big)e_{2+\alpha+\beta-n}\\
&+\frac{s}{\pi}\delta_{\alpha,n}\delta_{\beta,0}+\frac{s}{\pi}\delta_{\beta,n}\delta_{\alpha,0}\nonumber
\end{align}
where $\delta_{i,j}$ denotes the Kronecker delta, i.e.~$\delta_{i,j}=1$ for $i=j$ and $\delta_{i,j}=0$ else, and the quantities $e_{\ell}$ ($\ell \geq 2$) are defined by~\eqref{equ:eell}.
The constant term in~\eqref{equ:nonlinpde} is given by
\begin{equation} \label{equ:constcoeff}
c_{\alpha,\beta}=(-1)^{\frac{\alpha-\beta}{2}}F_{\alpha+\beta}-\frac{s}{\pi}{b}_{\alpha,\beta}(0),
\end{equation}
where $F_{\alpha+\beta}$ denotes the infinite sum
\begin{equation} \label{equ:infsumF}
F_{\alpha+\beta}=\sum_{k\in \Z} \frac{k^{2+\alpha+\beta}}{\sinh^2(k\pi/s)}.
\end{equation}
\end{theorem}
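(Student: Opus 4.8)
The plan is to establish \eqref{equ:nonlinpde} by comparing Fourier coefficients of both sides and invoking Theorem~\ref{thm:result1} for the convolution that arises. First I would differentiate the Fourier series \eqref{equ:introsol} termwise, giving
\[
u_s^{(n)}(x)=\sum_{k\in\Z}(-\I k)^n\frac{k}{\sinh(k\pi/s)}\E^{-\I kx},
\]
so that the coefficient of $\E^{-\I jx}$ in $u_s^{(n)}$ is $(-\I)^n j^{1+n}/\sinh(j\pi/s)$ for $j\neq 0$, and equals $s/\pi$ for $j=0,\,n=0$ and $0$ for $j=0,\,n\geq1$ (de l'Hospital). Because these coefficients decay exponentially in $|k|$, every series here converges absolutely and represents a smooth $2\pi$-periodic function; hence it is legitimate to form the product $u_s^{(\alpha)}u_s^{(\beta)}$ by convolving coefficients, and it suffices to check that the two sides of \eqref{equ:nonlinpde} have the same Fourier coefficients.

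Next I would compute the coefficient at a frequency $j\neq 0$. Abbreviating $c_k^{(\alpha)}=k^{1+\alpha}/\sinh(k\pi/s)$, this coefficient equals $(-\I)^{\alpha+\beta}\sum_{k\in\Z}c_k^{(\alpha)}c_{j-k}^{(\beta)}$; applying the substitution $k\mapsto-k$ together with the parity $c_{-k}^{(\alpha)}=(-1)^\alpha c_k^{(\alpha)}$ rewrites it as $(-\I)^{\alpha+\beta}(-1)^\alpha\sum_{k\in\Z}c_k^{(\alpha)}c_{k+j}^{(\beta)}$, which is precisely $(-\I)^{\alpha+\beta}(-1)^\alpha$ times the left-hand side of \eqref{equ:polynomial} (the de l'Hospital values at the two vanishing denominators being exactly what the $s/\pi$-terms in \eqref{equ:fouriercoeff2a} encode). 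Theorem~\ref{thm:result1} then evaluates this as $(-\I)^{\alpha+\beta}(-1)^\alpha\sum_{n}a_{\alpha,\beta}(n)\,j^{1+n}/\sinh(j\pi/s)$. Setting this equal to the $j$-th coefficient of the right-hand side of \eqref{equ:nonlinpde}, namely $\sum_n b_{\alpha,\beta}(n)(-\I)^n j^{1+n}/\sinh(j\pi/s)$, I cancel $1/\sinh(j\pi/s)$; the remaining identity of two polynomials in $j$ holds at all nonzero integers $j$, hence coefficientwise, yielding $b_{\alpha,\beta}(n)=(-\I)^{\alpha+\beta-n}(-1)^\alpha a_{\alpha,\beta}(n)$. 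Since $a_{\alpha,\beta}(n)$ vanishes unless $\alpha+\beta+n$ is even, the phase $(-\I)^{\alpha+\beta-n}(-1)^\alpha$ collapses to the real sign $(-1)^{(\alpha-\beta+n)/2}$, and \eqref{equ:fouriercoeff1}--\eqref{equ:fouriercoeff2} then follow from \eqref{equ:fouriercoeff1a}--\eqref{equ:fouriercoeff2a}.

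Finally I would determine $c_{\alpha,\beta}$ from the frequency-zero coefficient, which falls outside the scope of Theorem~\ref{thm:result1} ($j=0$ is excluded there) and must be computed by hand. A direct evaluation gives $\sum_{k\in\Z}\hat u_k^{(\alpha)}\hat u_{-k}^{(\beta)}=(-1)^\alpha\I^{\alpha+\beta}F_{\alpha+\beta}$ with $F_{\alpha+\beta}$ as in \eqref{equ:infsumF}, where the de l'Hospital value of the $k=0$ summand reproduces the product $\hat u_0^{(\alpha)}\hat u_0^{(\beta)}$ automatically, so that no extra boundary term is needed. Equating this with the $j=0$ coefficient of the right-hand side of \eqref{equ:nonlinpde}, which is $b_{\alpha,\beta}(0)\,s/\pi+c_{\alpha,\beta}$, and again reducing the phase $(-1)^\alpha\I^{\alpha+\beta}$ to $(-1)^{(\alpha-\beta)/2}$ through the parity of $\alpha+\beta$, produces \eqref{equ:constcoeff}. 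I expect the only real obstacles to be the bookkeeping of the $\I$- and $(-1)$-phases together with the parity argument that renders them real signs, and the disciplined tracking of the $k=0$ terms under the de l'Hospital convention; the genuinely analytic work has already been isolated in Theorem~\ref{thm:result1}.
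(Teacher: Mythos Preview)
Your proposal is correct and follows essentially the same route as the paper: both compute the Fourier coefficients of $u_s^{(\alpha)}u_s^{(\beta)}$ by convolution, invoke Theorem~\ref{thm:result1} for the $j\neq 0$ coefficients, and treat the $j=0$ coefficient separately via $F_{\alpha+\beta}$. The only cosmetic differences are that the paper works with the $\E^{\I kx}$ form of the series and uses the substitution $\ell=-k-j$ (yielding the prefactor $\I^{\alpha-\beta}$) whereas you keep the $\E^{-\I kx}$ convention and use $k\mapsto -k$ (yielding $(-\I)^{\alpha+\beta}(-1)^\alpha$); as you note, these phases coincide on the support of $a_{\alpha,\beta}(n)$ by the parity $a_{\alpha,\beta}(n)=0$ for $\alpha+\beta+n$ odd.
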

Note that the infinite sum $F_{\alpha+\beta}$ and the coefficient ${b}_{\alpha,\beta}(0)$ are zero if $\alpha+\beta$ is odd. Hence, in this case, the constant term in~\eqref{equ:nonlinpde} vanishes, i.e.
\begin{equation*}
 c_{\alpha,\beta}=0 \qquad \text{if $\alpha+\beta$ odd}.
\end{equation*}
Moreover, the coefficients fulfill the symmetry property
\begin{equation} \label{equ:symmetrycoeff}
b_{\alpha,\beta}(n)=b_{\beta,\alpha}(n).
\end{equation}

\begin{remark} \label{rem:trigon}
The function $u_s$ given by~\eqref{equ:solu3} depends on one free parameter $s>0$.
In the small amplitude limit, which corresponds to the limit $s\to 0$ or equivalently $m\to 0$, see~\eqref{equ:relationsm}, we have $\cn(z;m(s))\to\cos(z)$~\cite[eq.~16.13.2]{abramowitz1964handbook}. This indicates that the identities~\eqref{equ:nonlinpde} can be considered generalizations of the trigonometric identities for products of cosine and sine to the elliptic case. To demonstrate this consider the first equation given in~\eqref{equ:odes}. Inserting the Fourier representation~\eqref{equ:introsol} for $u_s$ and the expressions~\eqref{equ:esmalls} and \eqref{equ:infsumFsmall} for $e_2$ (turning up in the coefficient $b_{0,0}(0)$) and $F_0$, respectively, the leading order terms, i.e.~terms of order $\exp(-2\pi/s)$, give the well-known formula $2\cos^2(x)= 1+\cos(2x)$. Analogously, the second and third equation in~\eqref{equ:odes} asymptotically give the identities $2\cos(x)\sin(x)=\sin(2x)$ and $2\sin^2(x)=1-\cos(2x)$, respectively.

On the other hand, in the limit $s\to \infty$ (corresponding to $m\to 1$) the function $u_s$ becomes a sequence of independent solitary waves given by $s^2\sech^2(sx/2)/2$ (use~\cite[eq.~16.15.2]{abramowitz1964handbook}). Then the first equation in our system~\eqref{equ:nonlinpde} just states that this is a solution of the KdV equation, known as $1$-soliton.
\end{remark}

\begin{theorem}[{\bf Basis for $L^2(0,2\pi)$}] \label{thm:result4}
Let $s>0$ be such that $\sinh(\pi/(2s))\geq 1$ and $u_s$ the $2\pi$-periodic function given by Definition~\ref{def:us}. Then the set of functions
\begin{equation} \label{equ:basis}
 \{1,u^{\vphantom{a}}_s,u'_s,u''_s,\dots\}
\end{equation}
constitutes a basis for $L^2(0,2\pi)$, i.e.~the space of square-integrable $2\pi$-periodic functions.
\end{theorem}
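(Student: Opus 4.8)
The plan is to establish two things: that the system \eqref{equ:basis} is linearly independent, and that its closed linear span is all of $L^2(0,2\pi)$; together these give the asserted basis property. Throughout I would work on the Fourier side, using the representation \eqref{equ:introsol}, so that $u_s^{(n)}$ has Fourier coefficient $(-\I k)^n\,k/\sinh(k\pi/s)$ at the mode $\E^{-\I kx}$, with the $k=0$ mode surviving only for $n=0$. Linear independence is then immediate: a finite relation $a_{-1}+\sum_{n=0}^N a_n u_s^{(n)}=0$ forces $\sum_{n=0}^N a_n(-\I k)^n=0$ for every $k\neq 0$, i.e.\ a polynomial in $k$ with infinitely many roots, whence all $a_n=0$ and then $a_{-1}=0$.

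The heart of the argument is density of the span, and here I would exploit parity. Since the coefficients $k/\sinh(k\pi/s)$ are even in $k$, the function $u_s$ is real and even, so its even-order derivatives are even and its odd-order derivatives are odd; it therefore suffices to prove completeness separately in the even and odd subspaces of $L^2(0,2\pi)$. In the even sector, spanned by $\{1,u_s,u_s'',u_s^{(4)},\dots\}$, suppose $g=\sum_{k\geq 1}\hat g_k\cos(kx)$ is orthogonal to every element; orthogonality to the constant $1$ already forces $g$ mean-zero, and by Parseval orthogonality to the $u_s^{(2j)}$ says exactly that $\sum_{k\geq 1}\hat g_k\, w_k\, k^{2j}=0$ for all $j\geq 0$, where $w_k=k/\sinh(k\pi/s)>0$. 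I would then introduce the auxiliary function
\begin{equation*}
F(z)=\sum_{k\geq 1}\hat g_k\, w_k\cos(kz).
\end{equation*}
Because $w_k$ decays like $\E^{-k\pi/s}$ while $\hat g_k$ is bounded, this series converges locally uniformly on the strip $|\im z|<\pi/s$ and is analytic there. The orthogonality relations say precisely $F^{(2j)}(0)=0$ for all $j\geq 0$, and since $F$ is even its odd-order derivatives at $0$ vanish automatically; hence every Taylor coefficient of $F$ at the origin is zero, so $F\equiv 0$ on the strip by the identity theorem, giving $\hat g_k w_k=0$ and thus $g=0$. The odd sector is handled identically: writing $g=\sum_{k\geq1}\hat g_k\sin(kx)$, orthogonality to the $u_s^{(2j+1)}$ yields $\sum_k\hat g_k\,(kw_k)\,k^{2j}=0$, and the same auxiliary function $F(z)=\sum_k\hat g_k\,(kw_k)\cos(kz)$ vanishes identically. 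This shows the closed span of \eqref{equ:basis} is all of $L^2(0,2\pi)$.

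The step I expect to be the crux is pinning down the precise role of the hypothesis $\sinh(\pi/(2s))\geq 1$. The analytic argument above is insensitive to the size of $s$ and already yields density for every $s>0$, so the hypothesis should enter only when one upgrades mere completeness to a quantitatively controlled basis expansion — that is, when one compares the (non-orthogonal) system \eqref{equ:basis} with the trigonometric system and demands that the associated transition be well behaved. The natural obstruction is that, after symmetric normalization, the cross terms decay only like $\sqrt{w_kw_{k'}}\sim\E^{-(k+k')\pi/(2s)}$, so that a perturbation- or Neumann-type comparison with the orthonormal trigonometric basis can be expected to converge only once this ratio is small enough; a threshold of exactly the form $\sinh(\pi/(2s))\geq 1$ is what I would anticipate emerging there. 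Making this comparison rigorous — controlling the resulting Hankel/moment structure of the Gram matrix, rather than merely establishing density as above — is the delicate point, and is where I would expect the real work of the proof to lie.
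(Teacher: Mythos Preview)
Your linear-independence and density arguments are both correct, and the density argument is genuinely different from the paper's. The paper proceeds constructively: for each Fourier mode $\E^{\I jx}$ it writes down an explicit Lagrange-type interpolant $f_j^n$ in the span of $U_0,\dots,U_{2n-2}$ (where $U_m(k)=k^{1+m}/\sinh(k\pi/s)$), matching $\delta_{j,k}$ on $0<|k|\le n$, and then estimates the $l^2$-tail for $|k|>n$ via Stirling. That tail estimate is exactly where the hypothesis $\sinh(\pi/(2s))\ge 1$ enters: it is the threshold at which the leading exponential factor in their bound goes to zero rather than blowing up. Your orthogonal-complement argument, by contrast, is non-constructive: you pass to an auxiliary function $F(z)=\sum_k c_k\cos(kz)$, observe it is analytic on the strip $|\im z|<\pi/s$ because the weights $w_k=k/\sinh(k\pi/s)$ decay exponentially, and kill it with the identity theorem. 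This is cleaner, and it works for \emph{every} $s>0$, so you have in fact proved more than the paper claims; indeed the paper explicitly remarks that its proof ``does not rule out the possibility of our function set being a basis also for larger $s$''.

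Your final paragraph misdiagnoses the role of the hypothesis. You speculate that $\sinh(\pi/(2s))\ge 1$ is needed to upgrade mere density to some quantitative Schauder-type basis property, but the paper makes no such upgrade: what it calls a ``basis'' is precisely linear independence together with density of the span, the same thing you have established. The hypothesis is an artefact of the paper's constructive method, not a reflection of a stronger conclusion. So rather than leaving this as an open-ended crux, you can simply drop that paragraph and note that your argument removes the restriction on $s$ entirely.
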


The condition $\sinh(\pi/(2s))\geq 1$ is fulfilled for approximately $s< 1.782$. By~\eqref{equ:relationsm} this corresponds to an elliptic modulus $m<0.943$ in the elliptic function representation~\eqref{equ:solu3} of $u_s$. Note that our proof does not rule out the possibility of our function set being a basis also for larger $s$. However, as for very large $s$ the function $u_s$ consists of spikes at multiples of $2\pi$ and is essentially flat in-between, it is very plausible that there exists some finite $s$ above which it becomes impossible to describe arbitrary features at $\pi$ while controlling the behaviour at 0.

\section{Proofs} \label{sec:proof}
The aim of this section is to prove the results presented in Section~\ref{sec:results}. For Theorem~\ref{thm:result3} we have to show that the Fourier series of $u_s$ given by~\eqref{equ:introsol} solves~\eqref{equ:nonlinpde} for arbitrary $\alpha,\beta\in \N_0$, where the coefficients $b_{\alpha,\beta}(n)$, $n=0,\dots,2+\alpha+\beta$, and $c_{\alpha,\beta}$ are determined by~\eqref{equ:fouriercoeff1}--\eqref{equ:constcoeff}. The idea will be to show that the Fourier coefficients of $u_s$ fulfill the Fourier transformed equation. This is exactly the statement of Theorem~\ref{thm:result1}. Thus, the starting point will be to calculate the sum over $k\in \Z$ of the function $f_{\alpha,\beta}:\Z \to \R$ defined by
\begin{equation} \label{equ:mainfunction2}
f_{\alpha, \beta}(k)=\frac{k^{1+\alpha}(k+j)^{1+\beta}}{\sinh \big(k\pi /s\big)\sinh \big((k+j)\pi /s\big)}\cdot \frac{\sinh (j\pi /s)}{j}, \qquad j\neq 0.
\end{equation}
The function $f_{\alpha,\beta}(k)$ has two singularities at the points $k=-j$ and $k=0$.
To simplify notation, we will separate the set of integers into two sets, namely those lying to the \emph{left} respectively to the \emph{right} of the point $k=-j/2$. Additionally, we split each of these two sets into two further sets, lying in the \emph{inside}, that is between $-j$ and $0$, or in the \emph{outside}. This procedure results in the following four distinct sets
\begin{align*}
L_O&=\{ k\in \Z | k\leq -j\},\\
L_I&=\{ k\in \Z | -j<k\leq -j/2\},\\
R_I&=\{ k\in \Z | -j/2<k<0\},\\
R_O&=\{ k\in \Z | k\geq 0\},
\end{align*}
where we assumed without loss of generality $j>0$. If $j$ is even, the central point $-j/2$ needs special consideration. Here we follow the arbitrary choice to include it in $L_I$. Moreover, it will be useful to define the following sets
\begin{equation*}
L=L_O\cup L_I, \quad R=R_O\cup R_I, \quad I=L_I\cup R_I, \quad O=L_O\cup R_O.
\end{equation*}

After settling the notation we now come to prove Theorem~\ref{thm:result1}.

\subsection{Proof of Theorem~\ref{thm:result1}}

\begin{proof}
Obviously, the denominator of $f_{\alpha, \beta}(k)$ for large $j$ is essentially constant between $-j$ and $0$ and decays exponentially on the outside. Its zeros will be cancelled by the corresponding zeros of the numerator. In a coarse view, the function will therefore correspond to a polynomial on $I$ and vanish on $O$. To be specific, define the functions
\begin{equation*}
\tilde{g}_{\alpha,\beta}(k)=-\frac{2}{j}k^{1+\alpha}(k+j)^{1+\beta}
\end{equation*}
(note that $j\neq 0$ per the conditions of the theorem) and
\begin{equation}
g_{\alpha, \beta}(k)=\left\{ \begin{matrix} \tilde{g}_{\alpha,\beta}(k),& \quad k\in I,\\ 0,& \quad k\in O. \end{matrix}\right.
\end{equation}

As motivated above, for large $j$ the function  $g_{\alpha, \beta}$ will approach $f_{\alpha, \beta}$ away from the singularities at $-j$ and $0$. The idea of the proof will now be to split the computation of the sum into a sum over $g_{\alpha, \beta}$ on $I$ and a sum over the residuals, where the major contributions will come from the regions around the singularities:
\begin{equation} \label{equ:poldecomp}
\sum_{k\in \Z} f_{\alpha, \beta}(k)=S_1+S_2,\text{ with }S_1=\sum_{k\in \Z} \big(f_{\alpha, \beta}-g_{\alpha, \beta}\big)(k), \text{ and }S_2=\sum_{k\in \Z} g_{\alpha, \beta}(k)
\end{equation}

\noindent {\it Step 1: Computation of $S_1$}

We define
\begin{equation} \label{equ:defh}
\begin{aligned}
& h_1(k)=\tilde{g}_{\alpha,\beta}(k)\frac{1}{1-\E^{-2(k+j)\pi /s}}, \\
& h_2(k)=\tilde{g}_{\alpha,\beta}(k) \frac{-1}{1-\E^{2(k+j)\pi /s}}, \\
& h_3(k)=\tilde{g}_{\alpha,\beta}(k) \frac{-1}{1-\E^{-2k\pi /s}}, \\
& h_4(k)=\tilde{g}_{\alpha,\beta}(k)\frac{1}{1-\E^{2k\pi /s}},
\end{aligned}
\end{equation}
and
\begin{equation} \label{equ:defhdelta}
\Delta h_1(k)=\Delta h_2(k)=h_3(k), \qquad \Delta h_3(k)=\Delta h_4(k)=h_2(k),
\end{equation}
resulting in
\begin{equation*}
\big(f_{\alpha, \beta}-g_{\alpha, \beta}\big)(k)=\left\{ 
\begin{matrix} 
h_1+\Delta h_1, & k\in L_O,\\
h_2+\Delta h_2, & k\in L_I,\\
h_3+\Delta h_3, & k\in R_I,\\
h_4+\Delta h_4, & k\in R_O.\\
\end{matrix}\right.
\end{equation*}

For the purpose of computing the first sum in~\eqref{equ:poldecomp} we note that the identity
\begin{equation*}
S_1= \sum_{k\in L_O}(h_1+\Delta h_1)+\sum_{k\in L_I}(h_2+\Delta h_2)+
\sum_{k\in R_I}(h_3+\Delta h_3)+\sum_{k\in R_O}(h_4+\Delta h_4)
\end{equation*}
can be transformed by using~\eqref{equ:defhdelta} and thus equivalently written as
\begin{equation} \label{equ:sum1}
S_1=\sum_{k\in L_O}h_1+\sum_{k\in L_I\cup R}h_2+\sum_{k\in R_I\cup L}h_3+\sum_{k\in R_O}h_4,
\end{equation}
We will split this sum according to $S_1=S_{1a}+S_{1b}$, then we compute
\begin{align*}
S_{1a}=&\sum_{k\in L_O}h_1(k)+\sum_{k\in L_I\cup R}h_2(k)=\sum_{k=-\infty}^{-1}h_1(k-j)+h_1(-j)+\sum_{k=1}^{\infty}h_2(k-j)\\
=& h_1(-j)-\frac{2}{j}\left\{ \sum_{k=-\infty}^{-1}\frac{(k-j)^{1+\alpha}k^{1+\beta}}{1-\E^{-2k\pi /s}} +
\sum_{k=1}^{\infty}\frac{-(k-j)^{1+\alpha}k^{1+\beta}}{1-\E^{2k\pi /s}}\right\}\\
=& h_1(-j)-\frac{2}{j} \sum_{k=1}^{\infty} \frac{-k^{1+\beta}\big( (k-j)^{1+\alpha}+(-1)^{1+\alpha+\beta}(k+j)^{1+\alpha}\big)}{1-\E^{2k\pi /s}}\\
=& h_1(-j) -2\sum_{i=0}^{1+\alpha} \left\{ (-1)^{\alpha+\beta}[1+(-1)^{1+\alpha+\beta-i}]\binom{1+\alpha}{i} \sum_{k=1}^{\infty} \frac{k^{2+\alpha+\beta-i}}{1-\E^{2k\pi /s}}\right\}j^{i-1},
\end{align*}
with $h_1(-j)=\delta_{\beta,0}(-1)^{\alpha}j^{\alpha}s/\pi$ due to the rule of de~l'Hospital.
Analogously, we have
\begin{align*}
S_{1b}=&\sum_{k\in R_I\cup L}h_3+\sum_{k\in R_O}h_4=\sum_{k=-\infty}^{-1}h_3(k)+h_4(0)+\sum_{k=1}^{\infty}h_4(k)\\
=& h_4(0) -2\sum_{i=0}^{1+\beta} \left\{ [1+(-1)^{1+\alpha+\beta-i}]\binom{1+\beta}{i} \sum_{k=1}^{\infty} \frac{k^{2+\alpha+\beta-i}}{1-\E^{2k\pi /s}}\right\}j^{i-1},
\end{align*}
with $h_4(0)=\delta_{\alpha,0}j^{\beta}s/\pi$.
Thus, the sum $S_1$ given by \eqref{equ:sum1} is equal to
\begin{equation} \label{equ:step1}
\begin{aligned}
S_1=&(-1)^{\alpha}j^{\alpha}\frac{s}{\pi}\delta_{\beta,0}+j^{\beta}\frac{s}{\pi}\delta_{\alpha,0}\\
&-2 \sum_{n=0}^{\infty} [1+(-1)^{\alpha+\beta-n}]\left\{ \binom{1+\beta}{1+n}+(-1)^{\alpha+\beta}\binom{1+\alpha}{1+n}\right\}\sum_{k=1}^{\infty} \frac{k^{1+\alpha+\beta-n}}{1-\E^{2k\pi /s}}j^{n}.
\end{aligned}
\end{equation}
\vspace{4mm}

\noindent {\it Step 2: Computation of $S_2$}

We now compute the second sum in~\eqref{equ:poldecomp}. We have
\begin{align*}
S_2&=\sum_{k\in I} \tilde{g}_{\alpha,\beta}(k)=
-\frac{2}{j}\sum_{k=-j}^{0} k^{1+\alpha}(k+j)^{1+\beta}\\
&=-\frac{2}{j}\sum_{k=1}^j \sum_{i=0}^{1+\beta} (-1)^{\alpha+\beta-i} \binom{1+\beta}{i}k^{2+\alpha+\beta-i}j^i\\
&= 2\sum_{i=0}^{1+\beta} \sum_{n=0}^{2+\alpha+\beta-i} \frac{(-1)^{1+\alpha+\beta-i} }{3+\alpha+\beta-i} \binom{1+\beta}{i} \binom{3+\alpha+\beta-i}{n} B^*_n j^{2+\alpha+\beta-n}\\
&=2\sum_{n=0}^{2+\alpha+\beta} \sum_{i=0}^{\min\{1+\beta,n\}} \frac{(-1)^{1+\alpha+\beta-i} }{3+\alpha+\beta-i} \binom{1+\beta}{i} \binom{3+\alpha+\beta-i}{2+\alpha+\beta-n} B^*_{2+\alpha+\beta-n} j^{n},
\end{align*}
where we have used the identity, see~\cite[Section~0.12]{gradstejn1980table},
\begin{equation*}
\sum_{k=1}^j k^q=\frac{1}{q+1}\sum_{n=0}^{q} \binom{q+1}{n} B^*_n j^{q+1-n},
\end{equation*}
where $B^*_n$ denotes the $n$th second Bernoulli number (see~Remark~\ref{rem:bernoullinumbers}).

Next, let us define for $\alpha,\beta,n\in\N$ with $n\leq 2+\alpha+\beta$ the following function
\begin{equation*}
s(\alpha,\beta,i,n)=\frac{(-1)^{1+\alpha+\beta-i} }{3+\alpha+\beta-i} \binom{1+\beta}{i} \binom{3+\alpha+\beta-i}{2+\alpha+\beta-n},
\end{equation*}
such that
\begin{equation*}
 S_2=2\sum_{n=0}^{2+\alpha+\beta} \sum_{i=0}^{\min\{1+\beta,n\}}s(\alpha,\beta,i,n) B^*_{2+\alpha+\beta-n} j^{n}.
\end{equation*}
For $i\le\min\{1+n,1+\beta\}$ we have 
\begin{equation*}
 \frac{s(\alpha,\beta,i+1,n)}{s(\alpha,\beta,i,n)}=\frac{(i-(1+n))(i-(1+\beta))}{(i-(2+\alpha+\beta))}\frac{1}{(i+1)},
\end{equation*}
while $s(\alpha,\beta,i,n)=0$ for $i=1+\min\{1+n,1+\beta\}$. This being able to write the quotient of successive summands by a expression that is rational in the summation index is the defining criterion for hypergeometric series. To be specific, we have
\begin{equation} \label{equ:sum1plus}
S^*=\sum_{i=0}^{1+\min\{n,\beta\}} s(\alpha,\beta,i,n)= \frac{(-1)^{1+\alpha+\beta} }{3+\alpha+\beta} \binom{3+\alpha+\beta}{2+\alpha+\beta-n} F\left[ \begin{array}{c} -1-n, -1-\beta \\-2-\alpha-\beta \end{array};1\right],
\end{equation}
where $F(a,b;c;z)$ denotes the Gaussian hypergeometric series defined by (see e.g.~\cite[eq.~15.1.1]{abramowitz1964handbook}) 
\begin{equation} \label{equ:gauss}
F\left[ \begin{array}{c} a, b \\c \end{array};z\right]=\sum_{k=0}^{\infty}\frac{(a)_k(b)_k}{(c)_k}\frac{z^k}{k!},
\end{equation}
and $(\alpha)_k$ is the Pochhammer symbol given by
\begin{equation*}
(\alpha)_k=\alpha(\alpha+1)\dots(\alpha+k-1)\quad \text{for }k=1,2,\dots,\quad (\alpha)_0=1.
\end{equation*}

Note that in the case of negative integer $a$ or $b$, the expression \eqref{equ:gauss} is to be understood as the finite series up to the first summand equal to zero, and is therefore a polynomial. In this case and for $z=1$ we have
\begin{equation} \label{equ:gaussthm}
 F\left[ \begin{array}{c} a, b \\c \end{array};1\right]=\frac{(c-a)_{-b}}{(c)_{-b}},
\end{equation}
which holds as long as the series converges\footnote{This is the case if $-b\in\N$ and $c\not\in\{b,b+1\dots 0\}$ (or equivalently for $a$), and follows from Gauss' theorem (that is $F(a,b;c;1)=\Gamma(c)\Gamma(c-a-b)/\Gamma(c-a)\Gamma(c-b)$ if $\re (c-a-b)>0$ and $c$ no negative integer, see~\cite[eq.~15.1.20]{abramowitz1964handbook}): Multiply \eqref{equ:gaussthm} by $(c)_{-b}$, then for fixed $b$ and $a$ both sides are polynomials in $c$ of degree $b$ that agree on infinitely many values and are therefore equal everywhere.}.
Hence \eqref{equ:sum1plus} becomes 
\begin{equation*}
S^*= \frac{(-1)^{1+\alpha+\beta}}{3+\alpha+\beta} \binom{3+\alpha+\beta}{2+\alpha+\beta-n} \frac{(-1-\alpha-\beta+n)_{1+\beta}}{(-2-\alpha-\beta)_{1+\beta}}.
\end{equation*}
In the case $n\leq 1+\alpha+\beta$ we have
\begin{equation*}
S^*=\frac{(-1)^{1+\alpha+\beta}}{2+\alpha+\beta-n}\binom{1+\alpha}{1+n},
\end{equation*}
while for $n=2+\alpha+\beta$ we compute
\begin{equation*}
S^*=\frac{(-1)^{1+\alpha+\beta}}{3+\alpha+\beta} \frac{(1)_{1+\beta}}{(-2-\alpha-\beta)_{1+\beta}}=(-1)^{\alpha}\frac{(1+\alpha)!(1+\beta)!}{(3+\alpha+\beta)!}
\end{equation*}
Moreover, we have
\begin{align*}
\ s(\alpha,\beta,1+n,n)=\frac{(-1)^{\alpha+\beta-n}}{2+\alpha+\beta-n}\binom{1+\beta}{1+n},
\end{align*}
which is nonzero only if $n\le \beta$. 
Thus, finally, step~2 yields
\begin{equation} \label{equ:step2}
\begin{aligned}
 S_2=&2(-1)^{\alpha}\frac{(1+\alpha)!(1+\beta)!}{(3+\alpha+\beta)!}\\
&+2\sum_{n=0}^{1+\alpha+\beta} \frac{(-1)^{1+\alpha+\beta}}{2+\alpha+\beta-n}\Big( \binom{1+\alpha}{1+n}+(-1)^n\binom{1+\beta}{1+n}\Big) B^*_{2+\alpha+\beta-n}j^n.
\end{aligned}
\end{equation}
Note that the term for $n=1+\alpha+\beta$ vanishes. Thus we can replace $B^*_{2+\alpha+\beta-n}$ by $B_{2+\alpha+\beta-n}$, see Remark~\ref{rem:bernoullinumbers}.

Summarizing step~1 and step~2 by inserting~\eqref{equ:step1} and~\eqref{equ:step2} into~\eqref{equ:poldecomp} we deduce
\begin{equation*} 
\sum_{k\in\Z}f_{\alpha, \beta}(k)= \sum_{n=0}^{2+\alpha+\beta} {a}_{\alpha, \beta}(n) j^n \quad \text{for}\quad j\neq 0,
\end{equation*}
where the coefficients ${a}_{\alpha, \beta}(n)$ are given by~\eqref{equ:fouriercoeff1a}--\eqref{equ:fouriercoeff2a}.
\end{proof}

\subsection{Proof of Theorem~\ref{thm:result3}}

\begin{proof}
For $u_s$ given by~\eqref{equ:introsol} we have
\begin{equation*}
u_s^{(n)}(x)=\I^n\sum_{k\in \Z} \frac{k^{1+n}}{\sinh (k\pi /s)}\E^{\I kx}.
\end{equation*}
We compute
\begin{align*}
\big( u_s^{(\alpha)}u_s^{\beta)}\big)(x)&=\I^{\alpha+\beta}\Big(\sum_{k\in \Z}\frac{k^{1+\alpha}}{\sinh (k\pi /s)}\E^{\I kx}\Big)\Big(\sum_{\ell\in \Z}\frac{\ell^{1+\beta}}{\sinh (\ell\pi /s)}\E^{\I \ell x}\Big)\\
&=\I^{\alpha-\beta}\sum_{j\in \Z}\sum_{k\in\Z}\frac{k^{1+\alpha}(k+j)^{1+\beta}}{\sinh (k\pi /s)\sinh((k+j)\pi /s)}\E^{-\I jx}\\
&= \I^{\alpha-\beta}\sum_{j\in\Z\setminus \{0\}}\sum_{k\in\Z}f_{\alpha,\beta}(k)\frac{j}{\sinh(j\pi /s)}\E^{-\I jx}+\I^{\alpha-\beta} F_{\alpha+\beta},
\end{align*}
where we substituted $\ell=-k-j$ and $f_{\alpha,\beta}(k)$ is given by~\eqref{equ:mainfunction2}. Hence by using Theorem~\ref{thm:result1} we get
\begin{align*}
\big( u_s^{(\alpha)}u_s^{\beta)}\big)(x)&= \I^{\alpha-\beta} \sum_{n=0}^{2+\alpha+\beta}{a}_{\alpha,\beta}(n)\sum_{j\in\Z\setminus \{0\}}\frac{j^{1+n}}{\sinh(j\pi /s)}\E^{-\I jx}+\I^{\alpha-\beta} F_{\alpha+\beta}\\
&= -\I^{\alpha-\beta}\frac{s}{\pi}{a}_{\alpha,\beta}(0)+\sum_{n=0}^{2+\alpha+\beta}\I^{\alpha-\beta+n}{a}_{\alpha,\beta}(n)u_s^{(n)}(x)+\I^{\alpha-\beta} F_{\alpha+\beta},
\end{align*}
where $F_{\alpha+\beta}$ is defined by~\eqref{equ:infsumF}. Finally, setting ${b}_{\alpha,\beta}(n)=(-1)^{\frac{\alpha-\beta+n}{2}}{a}_{\alpha,\beta}(n)$ finishes the proof.
\end{proof}

\subsection{Proof of Theorem~\ref{thm:result4}}

\begin{proof}
The set of functions $\{\E^{\I jx}\}_{j\in \Z}$ forms an (orthogonal) basis for $L^2(0,2\pi)$. Due to Plancherel's theorem~\cite[p.~17]{stein1975introduction} this carries over to the space of Fourier coefficients. Indeed, the Fourier transform of these basis functions
\begin{equation*}
 \mathcal{F}(\E^{\I jx})(k)=F_j(k)=\delta_{k,j},
\end{equation*}
where $\delta_{k,j}$ denotes the Kronecker delta, is obviously a basis for $l^2(\Z)$.
Note that the Fourier transform of the functions $u_s^{(m)}(x)$ is
\begin{equation*}
\mathcal{F}\big( u_s^{(m)}\big)(k)=U_m(k)=\frac{k^{1+m}}{\sinh(\lambda k)}, \quad \text{with $\lambda=\frac{\pi}{s}$},
\end{equation*}
see the Fourier series representation~\eqref{equ:introsol} of $u_s$. Since $\{k^{\ell}\}_{\ell\geq 1}$ are linearly independent so are $\{U_m\}_{m\geq 0}$.

The idea of the proof will be to show that any element of the Fourier basis can be described by a finite linear combination of the derivatives of $u_s$ to arbitrary accuracy. We will give an explicit expression for this description, motived by the idea of Lagrange interpolation of the Fourier components. 

To be specific, consider a fixed $j\in\Z$. The goal is to find a linear combination of the derivatives of $u_s$ and the constant function so that its Fourier transform approaches $\delta_{j,k}$. If $j=0$, then this linear combination will only involve the constant function. Without loss of generality we can therefore consider $j\ne 0$ and need to describe $\delta_{j,k}$ only for $k\ne 0$ by the Fourier transforms $U_m(k)$ (the constant component taking care of $k=0$). 

For a given $n\in\N$ with $n\ge|j|$ the expression
\begin{equation} \label{equ:interpoly}
f_j^n(k)=\frac{\sinh(\lambda j)}{\sinh(\lambda k)}\prod_{\substack{i=-n\\i\neq 0, j}}^n\frac{k-i}{j-i}
\end{equation}
obviously is a linear combination of the Fourier transforms $U_m$, $0\le m\le 2n-2$ and fulfills $f_j^n(k)=\delta_{k,j}$ for $0<|k|\le n$. What remains to be shown is 
\begin{equation}
\| f_j^n(k)-\delta_{k,j}\|_{l^2(\Z)}=\sum_{k=-\infty}^{-n-1} \vert f_j^n(k)\vert^2+\sum_{k=n+1}^{\infty} \vert f_j^n(k)\vert^2\xrightarrow[n\to\infty]{} 0\label{equ:l2conv}
\end{equation}
for fixed $j$. For clarity of notation, we will prove this only for the contribution with $k>n$, as the statement for $k<-n$ follows analogously. 

Under this assumption, we can write \eqref{equ:interpoly} equivalently as
\begin{equation}
f_j^n(k)=(-1)^{n-j}\frac{\sinh(\lambda j)}{\sinh(\lambda k)}\frac{j}{(n-j)!(n+j)!}\frac{(k+n)!(k-n)}{(k-n)!k(k-j)},
\end{equation}
and for $n$ large enough we will have
\begin{equation}
|f_j^n(k)|<4j\sinh(\lambda j)g^n(k)\quad\text{with}\quad g^n(k)=\frac{(k+n)!(k-n)\E^{-\lambda k}}{(n!)^2(k-n)!k^2}.
\end{equation}

We will now derive an $l^\infty$-bound on $g^n(k)$. To this end we use Stirling's formula $m!=\sqrt{2\pi m}\big(\frac{m}{\E}\big)^m \big(1+O(\frac{1}{m})\big)$, which leads us to
\begin{equation}
g^n(k)=\frac{(k+n)^{k+n}}{(k-n)^{k-n}}\frac{\sqrt{k^2-n^2}}{2\pi k^2n^{2n+1}}\E^{-\lambda k}\big(1+O({\textstyle\frac{1}{k+n}})+O({\textstyle\frac{1}{k-n}})+O({\textstyle\frac{1}{n}})\big).\label{equ:gstirl}
\end{equation}
Note that as $k-n\ge 1$ and $n$ is positive (and large), the most critical asymptotic correction is $O(\frac{1}{k-n})$, stemming from the factor $(k-n)!$. Therefore, this corrections corresponds to a multiplication by a bounded function, and will only be relevant in a region where $f_j^n(k)$ is small anyway, as it has a zero crossing at $k=n$.

At this point, it is not hard to see that above function displays a scaling behaviour: defining the scaled independent variable $x=k/n$ leads to
\begin{equation}
\log\bigl(g^n(x)\bigr)\approx-2\log n+n\bigr(\log(x^2-1)+x\log{\textstyle\frac{x+1}{x-1}}-\lambda x\bigr)+{\textstyle\frac{1}{2}}\log(x^2-1)-2\log x-\log(2\pi).
\end{equation}
The asymptotic corrections can obviously only give a bounded additive contribution at $x\gtrsim 1$. As a consequence, the large $n$-behaviour of the $l^\infty$-norm of $g^n(k)$ is given by the sign of 
\begin{equation}
h(x)=\log(x^2-1)+x\log\Bigl(\frac{x+1}{x-1}\Bigr)-\lambda x.
\end{equation}

It can be directly seen that the unique maximum of $h(x)$ is attained at $x^*=\coth(\lambda/2)$, and that $h(x^*)=-2\log\bigr(\sinh(\lambda/2)\bigr)$. This result is independent of the sign of $j$, therefore by symmetry it holds also for $k<-n$. Hence we have proven that under the condition $\sinh(\lambda/2)\geq1$ (corresponding to $\lambda>1.763$ or $s<1.782$ since $s=\pi/\lambda$) $f_j^n(k)$ goes uniformly and exponentially to zero with large $n$. As it decays also exponentially in $k$, \eqref{equ:l2conv} follows.
\end{proof}

\section{Applications} \label{sec:appl}

This section is concerned with applications of Theorem~\ref{thm:result3}. First, we give a list of differential equations of the form~\eqref{equ:nonlinpde} for the first few choices of $\alpha$ and $\beta$. Then we use these equations for deriving explicit periodic solutions of the KdV equation and the Kawahara equation. Both represent nonlinear wave equations, where the latter contains additional higher order dispersive terms. In the case of the KdV equation we arrive at the well-known cnoidal wave solution derivable by direct integration~\cite{korteweg1895change} with one non-trivial free parameter. The Kawahara equation or fifth order KdV equation, however, is non-integrable. In this case we derive an explicit periodic travelling wave solution without non-trivial free parameter, in accordance with previous work~\cite{kano1981exact}.
With these two examples we demonstrate the procedure for finding periodic solutions to nonlinear differential equations, which are either exact or approximate solutions depending on the solvability of the system. 
More precisely, given a nonlinear ODE we make the ansatz
\begin{equation} \label{equ:ouransatz}
 f(\xi)=\sum_{i=0}^{2+\alpha+\beta}f_iu_s^{(i)}(\xi)+d,
\end{equation}
where $\alpha$ and $\beta$ are the orders of derivatives in the nonlinear term with the highest derivatives and two factors. In the case where the equation contains of higher multiplicative order one proceeds successively.

\subsection{The first few nonlinear differential identities} \label{subsec:specialcases}

We present certain nonlinear differential identities of the type~\eqref{equ:nonlinpde} for the periodic function $u_s$ given by Definition~\ref{def:us}. In~\eqref{equ:odes} we list specific examples of equations of low order. The coefficients are listed in Table~\ref{tab:coeff}.

\begin{equation} \label{equ:odes}
\begin{aligned}
&u_s\cdot u_s=b_{0,0}(2)u_s''+b_{0,0}(0)u_s+F_0-b_{0,0}(0)s/\pi,\\
&u_s'\cdot u_s=b_{1,0}(3)u_s'''+b_{1,0}(1)u_s',\\
&u_s'\cdot u_s'=b_{1,1}(4)u_s^{(4)}+b_{1,1}(0)u_s+F_2-b_{1,1}(0)s/\pi,\\
&u_s''\cdot u_s=b_{2,0}(4)u_s^{(4)}+b_{2,0}(2)u_s''+b_{2,0}(0)u_s-F_2-b_{2,0}(0)s/\pi,\\
&u_s''\cdot u_s'=b_{2,1}(5)u_s^{(5)}+b_{2,1}(1)u_s',\\
&u_s'''\cdot u_s=b_{3,0}(5)u_s^{(5)}+b_{3,0}(3)u_s'''+b_{3,0}(1)u_s',\\
&u_s''\cdot u_s''=b_{2,2}(6)u_s^{(6)}+b_{2,2}(2)u_s''+b_{2,2}(0)u_s+F_4-b_{2,2}(0)s/\pi,\\
&u_s'''\cdot u_s'=b_{3,1}(6)u_s^{(6)}+b_{3,1}(2)u_s''+b_{3,1}(0)u_s-F_4-b_{3,1}(0)s/\pi,\\
&u_s^{(4)}\cdot u_s=b_{4,0}(6)u_s^{(6)}+b_{4,0}(4)u_s^{(4)}+b_{4,0}(2)u_s''+b_{4,0}(0)u_s+F_4-b_{4,0}(0)s/\pi.
\end{aligned}
\end{equation}

\begin{table}[h!]
\begin{center}
\begin{tabular}{|c|c c c c c c c|}
\hline
&\multicolumn{7}{c|}{{\bf Values of the coefficients $b_{\alpha,\beta}(n)$}}\\
$n$	&	0	&	1	&	2	&	3	&	4	&	5	&	6	\\
\hline															
$b_{0,0}(n)$	&	$2(s/\pi-e_2)$	&	0	&	$-1/3$	&	0	&	0	&	0	&	0	\\
$b_{1,0}(n)$	&	0	&	$s/\pi-e_2$	&	0	&	$-1/6$	&	0	&	0	&	0	\\
$b_{1,1}(n)$	&	$-4e_4$	&	0	&	0	&	0	&	$-1/15$	&	0	&	0	\\
$b_{2,0}(n)$	&	$4e_4$	&	0	&	$s/\pi-e_2$	&	0	&	$-1/10$	&	0	&	0	\\
$b_{2,1}(n)$	&	0	&	$-2e_4$	&	0	&	0	&	0	&	$-1/30$	&	0	\\
$b_{3,0}(n)$	&	0	&	$6e_4$	&	0	&	$s/\pi-e_2$	&	0	&	$-1/15$	&	0	\\
$b_{2,2}(n)$	&	$-6e_6$	&	0	&	$2e_4$	&	0	&	0	&	0	&	$-1/70$	\\
$b_{3,1}(n)$	&	$6e_6$	&	0	&	$-4e_4$	&	0	&	0	&	0	&	$-2/105$	\\
$b_{4,0}(n)$	&	$-6e_6$	&	0	&	$10e_4$	&	0	&	$s/\pi-e_2$	&	0	&	$-1/21$	\\

\hline 																
\end{tabular}
\end{center}
\caption{Values of the coefficients $b_{\alpha,\beta}(n)$ defined by \eqref{equ:fouriercoeff1}--\eqref{equ:fouriercoeff2} for specific values of $\alpha$ and $\beta$. The quantities $e_{\ell}$ are defined by~\eqref{equ:eell}.}
\label{tab:coeff}
\end{table}

The coefficients of the equations~\eqref{equ:odes} contain the quantities $e_{\ell}$ defined by~\eqref{equ:eell}, that is
\begin{equation}  \label{equ:defeell}
e_{2n+2}=\frac{B_{2n+2}}{n+1}+4\sum_{k=1}^{\infty} \frac{k^{2n+1}}{1-\E^{2\pi k/s}},\qquad n\in \N_0, 
\end{equation}
and the infinite sum $F_{2n}$ given by \eqref{equ:infsumF}, that is
\begin{equation} \label{equ:deffell}
F_{2n}=\sum_{k\in \Z} \frac{k^{2n+2}}{\sinh^2(k\pi/s)},\qquad n\in \N_0.
\end{equation}
Note that $e_{2n+1}$ and $F_{2n+1}$ are zero for $n\in \N_0$. 

Above infinite sums show rapid convergence for small $s$. On the other hand, the exponential decay is slow for large $s$, which would necessitate to consider a large number of summands for reaching numerical precision. In this case, it is preferable to transform these expressions via Poisson summation, see Appendix~\ref{app:poissonsum}. More precisely, by using Lemma~\ref{thm:poissonspec} we can transform the expression~\eqref{equ:defeell} to read
\begin{equation} \label{equ:defeelllarge}
 e_{2n+2}=\frac{s}{\pi}\delta_{0,n}+(-1)^{n+1}s^2\Big( \frac{s^{2n}B_{2n+2}}{n+1}
-\frac{1}{(2\pi)^{2n}}\sum_{k=1}^{\infty} \frac{\partial^{2n}}{\partial k^{2n}}\frac{1}{\sinh^2(k\pi s)}\Big),
\end{equation}
where $n\in \N_0$.
Analogously, by applying Lemma~\ref{thm:poissonspec2} to~\eqref{equ:deffell} and using \eqref{equ:defeelllarge} we get
\begin{equation} \label{equ:deffelllarge}
 F_{2n}=\frac{2s^2}{\pi^2}\delta_{0,n}-(2n+2)e_{2n+2}\frac{s}{\pi}+(-1)^{n}\frac{2}{(2\pi)^{2n}}\frac{s^3}{\pi}
\sum_{k=1}^{\infty} \frac{\partial^{2n}}{\partial k^{2n}}\frac{k\pi s\cosh(k\pi s)}{\sinh^3(k\pi s)}.
\end{equation}

In the following, we will give the expressions for $e_{2n+2}$ and $F_{2n}$ with $n=0,1,2$ in both representations for purposes of reference. 

\subsection{Representation for small $s$}
The following expressions follow directly from \eqref{equ:defeell} and \eqref{equ:deffell}: 
\begin{equation} \label{equ:esmalls}
e_2=\frac{1}{6}+4\sum_{k=1}^{\infty} \frac{k}{1-\E^{2\pi k/s}}, \quad
e_4=-\frac{1}{60}+4\sum_{k=1}^{\infty} \frac{k^3}{1-\E^{2\pi k/s}}, \quad
e_6=\frac{1}{126}+4\sum_{k=1}^{\infty} \frac{k^5}{1-\E^{2\pi k/s}}
\end{equation}
\begin{equation} \label{equ:infsumFsmall}
F_0=\frac{s^2}{\pi^2}+2\sum_{k=1}^{\infty} \frac{k^2}{\sinh^2(\pi k/s)}, \quad
F_2=2\sum_{k=1}^{\infty} \frac{k^4}{\sinh^2(\pi k/s)}, \quad
F_4=2\sum_{k=1}^{\infty} \frac{k^6}{\sinh^2(\pi k/s)}
\end{equation}

\subsection{Representation for large $s$} \label{sec:repslarge}

The following expressions follow from \eqref{equ:defeelllarge} and \eqref{equ:deffelllarge}: 
\begin{equation} \label{equ:elarges}
\begin{aligned}
e_2&=\frac{s}{\pi}+s^2\Big\{ -\frac{1}{6}+\sum_{n=1}^{\infty}\frac{1}{\sinh ^2(n\pi s)}\Big\},\\
e_4&=s^4\Big\{ -\frac{1}{60}-\sum_{n=1}^{\infty}\frac{1}{\sinh ^2(n\pi s)}-\frac{3}{2}\sum_{n=1}^{\infty}\frac{1}{\sinh ^4(n\pi s)}\Big\},\\
e_6&=s^6\Big\{ -\frac{1}{126}+\sum_{n=1}^{\infty}\frac{1}{\sinh ^2(n\pi s)}+\frac{15}{2}\sum_{n=1}^{\infty}\frac{1}{\sinh ^4(n\pi s)}+\frac{15}{2}\sum_{n=1}^{\infty}\frac{1}{\sinh ^6(n\pi s)}\Big\}.
\end{aligned}
\end{equation}

\begin{equation} \label{equ:infsumFlarge}
\begin{aligned}
F_0=&-\frac{2s}{\pi}e_2+\frac{2s^2}{\pi^2}+\frac{s^3}{\pi}\sum_{n=1}^{\infty}\frac{2n\pi s\cosh(n\pi s)}{\sinh^3(n\pi s)},\\
F_2=&-\frac{4s}{\pi}e_4 -\frac{s^5}{\pi}\sum_{n=1}^{\infty}\frac{4n\pi s\cosh(n\pi s)+2n\pi s\cosh^3(n\pi s)}{\sinh^5(n\pi s)},\\
F_4=&-\frac{6s}{\pi}e_6 +\frac{s^7}{\pi}\sum_{n=1}^{\infty}\frac{17n\pi s\cosh(n\pi s)+26n\pi s\cosh^3(n\pi s)+2n\pi s\cosh^5(n\pi s)}{\sinh^7(n\pi s)}.
\end{aligned}
\end{equation}
Note that by the calculation of the coefficients $c_{\alpha,\beta}$ defined by~\eqref{equ:constcoeff} the first terms on the right hand side of~\eqref{equ:infsumFlarge} cancel out.

\subsection{The KdV equation}

The Korteweg--de Vries (KdV) equation is given by 
\begin{equation} \label{equ:KdV}
v_t+vv_z+\alpha v_{zzz}=0,
\end{equation}
where $\alpha$ is a constant. It is perhaps the most famous example for an integrable nonlinear wave equation. Soliton as well as periodic solutions have already been obtained by Korteweg and de~Vries~\cite{korteweg1895change}. Thanks to its integrability there exist algebro-geometric methods to derive more general (quasi-)periodic finite-gap solutions for this equation~\cite{gesztesy2003sea}.

In the following we will show how to derive the well-known periodic cnoidal wave solution by applying our method. Inserting the travelling wave ansatz $v(z,t)=f(x)$, with $x=z-ct$, and integrating once we get
\begin{equation} \label{equ:KdVint}
f^2=-2\alpha f_{xx}+2cf+d,
\end{equation}
where $d$ denotes an arbitrary integration constant. In the following, we determine the solution of this equation by using Theorem~\ref{thm:result3}. More precisely, we make the ansatz
\begin{equation} \label{equ:ansatzkdv}
f(x)=f_1u_s(x),
\end{equation}
where $f_1$ is a constant to be determined and $u_s$ denotes the $2\pi$-periodic function given by Definition~\ref{def:us}.
Inserting this ansatz into~\eqref{equ:KdVint} and using the relation~\eqref{equ:odes} for $u_s^2$ finding the solution reduces to the problem of comparing coefficients. We get
\begin{equation*}
f_1=6\alpha,\quad 
c=-6\alpha(e_2-\frac{s}{\pi}),\quad
d=36\alpha^2\big( F_0+2\frac{s}{\pi}( e_2-\frac{s}{\pi})\big),
\end{equation*}
where explicit expressions for $e_2$ and $F_0$ are given in~\eqref{equ:esmalls}--\eqref{equ:infsumFlarge}. Hence the function $f$ given by~\eqref{equ:ansatzkdv} with the coefficients determined as above solves~\eqref{equ:KdVint}. In particular, a $2\pi$-periodic solution of~\eqref{equ:KdV} is given by
\begin{equation} \label{equ:KdVsol}
v(z,t)=6\alpha u_s(z-ct),\qquad c=\alpha s^2\Big(1-6\sum_{n=1}^{\infty}\frac{1}{\sinh ^2(n\pi s)}\Big),
\end{equation}
where $u_s$ is the $2\pi$-periodic function given by Definition~\ref{def:us}. Here we used the representation~\eqref{equ:elarges} for $e_2$. Note that the solution~\eqref{equ:KdVsol} still depends on one free parameter $s\in(0,\infty)$. Two additional degrees of freedom correspond to shifting or scaling the solution:
\begin{equation*}
 v(z,t)=a+6\alpha \lambda^2u_s\big(\lambda z-\lambda^3(c-a)t\big), \qquad a,\lambda\in \R.
\end{equation*}

\subsection{The Kawahara equation}

The Kawahara equation, or higher order KdV equation, is given by
\begin{equation} \label{equ:HKdV}
v_t+vv_z+\alpha v_{3z}-\beta v_{5z}=0,
\end{equation}
where $\alpha$ and $\beta$ are constants and $v_{kz}$ denotes the $k$th derivative of $v$ with respect to $z$. The equation was derived by Kakutani and Ono~\cite{kakutani1969weak} in the context of studying weakly nonlinear magneto-acoustic waves in a cold collisionless plasma. A numerical investigation by Kawahara~\cite{kawahara1972oscillatory} established the existence of solitary wave solutions. Although equation~\eqref{equ:HKdV} is in general non-integrable, analytic expressions for exact solitary and periodic solutions have been found by Kano and Nakayama \cite{kano1981exact}, using a polynomial ansatz of the form~\eqref{equ:ansatzofothers} in terms of the Weierstra\ss{} function $\wp$, and by Yamamoto and Takizawa~\cite{yamamoto1981solution} by direct integration.

In the following we apply our method to derive an exact periodic travelling wave solution of~\eqref{equ:HKdV}.
Inserting the travelling wave ansatz $v(z,t)=f(x)$, with $x=z-ct$, and integrating once we get
\begin{equation} \label{equ:HKdVint}
f^2=2\beta f_{4x}-2\alpha f_{2x}+2cf+d,
\end{equation}
where $d$ denotes an arbitrary integration constant. Writing
\begin{equation} \label{equ:ansatzhkdv}
f(x)=f_1u_s(x)+f_2u_s''(x),
\end{equation}
where $f_1$ and $f_2$ are constants to be determined and $u_s$ denotes the $2\pi$-periodic function given by Definition~\ref{def:us}. Inserting this ansatz into~\eqref{equ:HKdVint} and using the relations~\eqref{equ:odes} for $u_s^2$, $u_s''u_s$, and $(u_s'')^2$ leads again to a problem of comparing the coefficients. The results are
\begin{equation} \label{equ:systemkaw}
\begin{aligned}
\beta f_2=&-\frac{1}{140}f_2^2,\\
\beta f_1+\alpha f_2=&-\frac{1}{10}f_1f_2,\\
\alpha f_1+cf_2=&-\frac{1}{6}f_1^2-f_1f_2(e_2-\frac{s}{\pi})+f_2^2e_4,\\
cf_1=&-f_1^2(e_2-\frac{s}{\pi})+4f_1f_2e_4-3f_2^2e_6,\\
d=&2f_1^2\frac{s}{\pi}(e_2-\frac{s}{\pi})-8f_1f_2\frac{s}{\pi}e_4+6f_2^2\frac{s}{\pi}e_6
				+f_1^2F_0-2f_1f_2F_2+f_2^2F_4,
\end{aligned}
\end{equation}
where the quantities $e_{\ell}$ and $F_{\ell}$ are defined in \eqref{equ:defeell} and \eqref{equ:deffell}, respectively.
We seek non-trivial solutions of the form~\eqref{equ:ansatzhkdv}, i.e. $f_1$ and $f_2$ not both equal zero, by solving the system~\eqref{equ:systemkaw}. We get from the first equation
\begin{equation*}
f_2=-140\beta.
\end{equation*}
Then if $\beta\neq 0$ we get from the second equation
\begin{equation*}
f_1=\frac{140}{13}\alpha.
\end{equation*}
Otherwise, if $\beta=0$, the problem reduces to the KdV case.

Combining the third and the fourth equation of system~\eqref{equ:systemkaw} leads to the constraint
\begin{equation} \label{equ:hilffunkt}
g(s, \alpha,\beta)=0 \quad \text{with $g(s, \alpha,\beta)=31\alpha^3+212940\alpha\beta^2e_4(s)+2768220\beta^3e_6(s)$}.
\end{equation}

For fixed $\alpha$ and $\beta\neq 0$ let $s_0>0$ be such that $g(s_0,\alpha,\beta)=0$. Then $c$ is determined by 
\begin{equation} \label{equ:kawaharac}
 c=\frac{31 \alpha^2}{507 \beta}-\frac{140}{13} \alpha \big( e_2(s_0)-\frac{s_0}{\pi}\big) -140 \beta e_4(s_0).
\end{equation}

In the following we will determine a region in the $(\alpha,\beta)$-plane for which the equation~\eqref{equ:hilffunkt} has a solution $s_0$.
For that purpose note that a point $(\alpha, \beta)$ lies in this region if and only if $(\lambda\alpha, \lambda\beta)$ does for all $\lambda\in \R$. Therefore it suffices to consider a fixed value $\beta\in \R\backslash \{0\}$, say $\beta=1$. 

Moreover, note that the quantities $e_4(s)$ and $e_6(s)$ fulfill the following properties: first, by the representations~\eqref{equ:esmalls} for small $s$, we have $e_4(0^+)=-1/60$ and $e_6(0^+)=1/126$, and second, by~\eqref{equ:elarges} for large $s$, we get $e_4(s)\sim -s^4/60$ and $e_6(s)\sim -s^6/126$. 

Next set $s=0$, then the equation $g(0,\alpha,1)=0$ has the real solution $\alpha_0=-13$. We have $g(0,\alpha<\alpha_0,1)<0$ and $g(0,\alpha>\alpha_0,1)>0$. Since $g(s,\alpha,1)\to -\infty$ as $s\to \infty$ we finally get that for all $\alpha>\alpha_0$ there exists a $s_0\in (0,\infty)$ such that $g(s_0,\alpha,1)=0$. This $s_0$ is unique if $\alpha>0$ since in this case $g$ is monotone in $s$.

Thus, finally, we deduce that for $(\alpha,\beta)\in \Gamma$ with
\begin{equation*}
 \Gamma=\{ (\alpha,\beta)\in \R\times \R \backslash\{0\}: \alpha/\beta>-13\},
\end{equation*}
equation~\eqref{equ:HKdV} has a periodic solution given by
\begin{equation} \label{equ:solkawahara}
v(z,t)=\frac{140}{13}\alpha u_{s_0}(z-ct)-140\beta u_{s_0}''(z-ct),
\end{equation}
where $u_{s_0}$ is the $2\pi$-periodic function given by Definition~\ref{def:us}. The velocity $c$ is determined by~\eqref{equ:kawaharac} and $s_0$ is a solution of equation~\eqref{equ:hilffunkt}.

For instance, in the case $\alpha=-1$ and $\beta=1$ we numerically determined $s_0=1.0346$ such that $g(s_0,-1,1)=0$ and the velocity computed from~\eqref{equ:kawaharac} is $c=1.8602$.

The solution contains no free parameter since $s_0$ and $c$ are fixed. Note, however, that -- as outlined above for the  KdV case -- one can still add a trivial degree of freedom by shifting the solution by an additive constant. However, the degree of freedom corresponding to scaling by a parameter $\lambda$ is best realised by substituting $\alpha\to \lambda^{-2}\alpha$ and $\beta\to \lambda^{-4}\beta$.

In the case $\alpha=0$ the solution~\eqref{equ:solkawahara} coincides with the one derived by Kano and Nakayama~\cite{kano1981exact} in terms of the Weierstra\ss\ elliptic function $\wp$ (the solution in~\cite{kano1981exact}, however, contains an error, it is valid except for a factor $1/2$, as already pointed out in~\cite{parkes2002jacobi}). As in our case the solution in~\cite{kano1981exact} contains no free (non-trivial) parameter and is exactly~\eqref{equ:solkawahara} with $m=0.5$ ($s=1$).

\appendix

\section{Poisson summation} \label{app:poissonsum}

This appendix is primarily based on \cite[Section~3-3]{papoulis1962fourier} (besides we also refer to~\cite[Section~7.2]{stein1975introduction}) and should serve as a suitable reference for Sections~\ref{sec:intro}~and~~\ref{sec:appl}, where the idea of Poisson summation is used. 

Let $f$ be an arbitrary function that can be expressed in the form
\begin{equation*}
f(x)=\frac{1}{2\pi}\int_{-\infty}^{\infty} F(\omega)\E^{\I \omega x}d\omega,
\end{equation*}
Here the function $F$ denotes the \emph{Fourier transform} of $f$ and is given by
\begin{equation} \label{equ:FourierTF}
F(\omega)=\int_{-\infty}^{\infty} f(x)\E^{-\I \omega x}dx.
\end{equation}

The following identity is known as \emph{Poisson's sum formula}:
\begin{equation} \label{equ:poissonsum}
\sum_{k=-\infty}^{\infty} f(x+kT)=\frac{1}{T} \sum_{k=-\infty}^{\infty}F\big(\frac{2\pi k}{T}\big) \E^{\frac{2\pi \I kx}{T}}.
\end{equation}
Setting $x=0$ in \eqref{equ:poissonsum} we immediately get the special form
\begin{equation} \label{equ:poissonspec}
\sum_{k=-\infty}^{\infty} f(kT)=\frac{1}{T} \sum_{k=-\infty}^{\infty}F\big(\frac{2\pi k}{T}\big).
\end{equation}
Note that if the function $f$ is discontinuous at $kT$, its value at this point has to be taken as $(f(kT^+)+f(kT^-))/2$. For our purposes, let us consider a function $f(x)$ which is discontinuous at $x=0$ and $f(x)=0$ for all $x<0$. Then \eqref{equ:poissonspec} becomes
\begin{equation*}
\frac{f(0^+)}{2}+\sum_{k=1}^{\infty}f(nT)=\frac{1}{T} \sum_{k=-\infty}^{\infty}F\big(\frac{2\pi k}{T}\big),
\end{equation*}
or equivalently
\begin{equation} \label{equ:poissonhalf}
\frac{f(0^+)}{2}+\sum_{k=1}^{\infty}f(kT)=\frac{1}{T} \int_0^{\infty} f(x)dx+2\sum_{k=1}^{\infty}\int_0^{\infty} f(x)\cos(2\pi kx)dx.
\end{equation}
We refer to \eqref{equ:poissonhalf} as \emph{Poisson summation on the half-line}.

As an application of~\eqref{equ:poissonhalf} we have the following two results, which will be useful in Section~\ref{sec:repslarge}.

\begin{lemma} \label{thm:poissonspec}
Let $s\in (0,\infty)$. Then we have
\begin{equation*}
\sum_{k=1}^{\infty}\frac{k^{2n+1}}{1-\E^{2\pi k/s}}=\frac{s}{4\pi}\delta_{0,n}+\Big( \frac{B_{2n+2}}{4(n+1)}
\big( (-1)^{n+1}s^{2n+2}-1 \big) +(-1)^n\frac{s^2}{4(2\pi)^{2n}}\sum_{k=1}^{\infty} \frac{\partial^{2n}}{\partial k^{2n}}\frac{1}{\sinh^2(k\pi s)}\Big),
\end{equation*}
where $\delta_{i,j}$ denotes the Kronecker delta, i.e.~$\delta_{i,j}=1$ for $i=j$ and $\delta_{i,j}=0$ else, and $B_n$ denotes the $n$th Bernoulli number, see Remark~\ref{rem:bernoullinumbers}.
\end{lemma}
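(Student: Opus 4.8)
The plan is to apply the half-line Poisson summation formula~\eqref{equ:poissonhalf} with $T=1$ to the function
\begin{equation*}
 f(x)=\frac{x^{2n+1}}{1-\E^{2\pi x/s}},\qquad x>0,
\end{equation*}
extended by $0$ for $x<0$, so that $f(k)$ reproduces the summand on the left-hand side. Since $f$ decays exponentially as $x\to\infty$, all relevant integrals and sums converge absolutely, and the only discontinuity of the extended $f$ sits at $x=0$. A short Taylor expansion of the denominator gives $f(0^+)=-\tfrac{s}{2\pi}\delta_{0,n}$, so the boundary term $f(0^+)/2$ in~\eqref{equ:poissonhalf} accounts precisely for the summand $\tfrac{s}{4\pi}\delta_{0,n}$.

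First I would evaluate the mean term. Expanding $1/(1-\E^{2\pi x/s})=-\sum_{m\ge1}\E^{-2\pi mx/s}$ and integrating term by term yields
\begin{equation*}
 \int_0^\infty f(x)\,dx=-\frac{(2n+1)!\,s^{2n+2}}{(2\pi)^{2n+2}}\,\zeta(2n+2),
\end{equation*}
which, after inserting the classical identity $\zeta(2n+2)=(-1)^n B_{2n+2}(2\pi)^{2n+2}/\bigl(2(2n+2)!\bigr)$, equals $(-1)^{n+1}B_{2n+2}s^{2n+2}/\bigl(4(n+1)\bigr)$. This already produces the $s^{2n+2}$ part of the target coefficient $\tfrac{B_{2n+2}}{4(n+1)}$.

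The core of the argument is the cosine integrals $\int_0^\infty f(x)\cos(2\pi kx)\,dx$. Using the same geometric series together with $\int_0^\infty x^{2n+1}\E^{-ax}\cos(bx)\,dx=\re\bigl[(2n+1)!\,(a-\I b)^{-2n-2}\bigr]$ (with $a=2\pi m/s$, $b=2\pi k$), and observing that $(a-\I b)^{-2n-2}$ is, up to the factor $(-1)^n/(2n+1)!$, the $2n$-th $k$-derivative of $(a-\I b)^{-2}$, I would reduce everything to the $n=0$ case differentiated $2n$ times in $k$. The remaining $m$-sum is handled by the partial-fraction identity $\sum_{m\in\Z}(z-m)^{-2}=\pi^2/\sin^2(\pi z)$ evaluated at $z=\I ks$: separating the $m=0$ term and pairing $\pm m$ gives
\begin{equation*}
 \sum_{m=1}^\infty\re\frac{1}{(m/s-\I k)^2}=\frac{1}{2}\Bigl(\frac{1}{k^2}-\frac{s^2\pi^2}{\sinh^2(\pi k s)}\Bigr),
\end{equation*}
since $\sin(\I \pi ks)=\I\sinh(\pi ks)$. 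Substituting back and summing over $k$, the $\sinh^{-2}$ part yields exactly $(-1)^n\tfrac{s^2}{4(2\pi)^{2n}}\sum_{k\ge1}\partial_k^{2n}\sinh^{-2}(k\pi s)$, while the $1/k^2$ part produces a second copy of $(2n+1)!\,\zeta(2n+2)/(2\pi)^{2n+2}$ that collapses, via the same Bernoulli--zeta identity, to $-B_{2n+2}/\bigl(4(n+1)\bigr)$. Adding this to the mean term assembles the bracket $\tfrac{B_{2n+2}}{4(n+1)}\bigl((-1)^{n+1}s^{2n+2}-1\bigr)$, completing the identity.

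I expect the main obstacle to be the bookkeeping in the cosine integrals: justifying the interchange of the $m$-sum, the $k$-derivatives, and the real part (all legitimate by the exponential decay of the summands for $s>0$), and correctly tracking the powers of $2\pi$ and the sign $(-1)^n$ so that the two $\zeta(2n+2)$ contributions combine with the exact numerical factor. The partial-fraction evaluation at the imaginary argument $z=\I ks$ is the one genuinely non-mechanical step.
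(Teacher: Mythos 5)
Your proposal is correct and follows essentially the same route as the paper: half-line Poisson summation applied to $f_n(x)=x^{2n+1}/(1-\E^{2\pi x/s})$, reduction of the general cosine integral to the $n=0$ case by differentiating $2n$ times in $k$, and the Bernoulli--zeta identity to assemble the two $\zeta(2n+2)$ contributions. The only addition is that you actually derive the $n=0$ cosine integral (via the geometric series and the partial-fraction expansion of $\pi^2/\sin^2(\pi z)$ at $z=\I ks$), which the paper merely states; all your intermediate constants check out against \eqref{equ:eqproof1} and \eqref{equ:eqproof2}.
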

\begin{proof}
We will apply formula~\eqref{equ:poissonhalf} to the real-valued function
\begin{equation*}
f_n(x)=x^{2n+1}/(1-\E^{2\pi x/s})\quad \text{for $n\in\N_0$}. 
\end{equation*}
For that purpose note that 
\begin{equation*}
\int_0^{\infty} f_0(x)\cos(2\pi kx)dx=-\frac{1}{8\pi^2}\frac{1}{k^2}+\frac{s^2}{8}\frac{1}{\sinh^2(k\pi s)}
\end{equation*}
and by differentiating $2n$ times with respect to $k$ we get
\begin{equation} \label{equ:eqproof1}
\int_0^{\infty} f_n(x)\cos(2\pi kx)dx=(-1)^{n+1}\frac{1}{(2\pi)^{2n}}\Big( \frac{1}{8\pi^2}\frac{(2n+1)!}{k^{2n+2}}-\frac{s^2}{8} \frac{\partial^{2n}}{\partial k^{2n}}\frac{1}{\sinh^2(k\pi s)}\Big).
\end{equation}
Moreover, we have
\begin{equation} \label{equ:eqproof2}
\int_0^{\infty}f_n(x)dx=(-1)^{n+1}\frac{s^{2n+2}}{2(2n+2)}B_{2n+2}.
\end{equation}
Thus, inserting~\eqref{equ:eqproof1} and~\eqref{equ:eqproof2} into~\eqref{equ:poissonhalf} and additionally using that
\begin{equation*} 
\sum_{k=1}^{\infty}\frac{1}{k^{2n+2}}=(-1)^{n}\frac{(2\pi)^{2n+2}}{2(2n+2)!}B_{2n+2},
\end{equation*}
proves the stated expression.
\end{proof}

\begin{lemma} \label{thm:poissonspec2}
Let $s\in (0,\infty)$. Then we have
\begin{equation*}
\sum_{k=1}^{\infty}\frac{k^{2n+2}}{\sinh^2(k\pi /s)}=(-1)^n \frac{s^{3}}{\pi}\Big( s^{2n}B_{2n+2}+\frac{1}{(2\pi)^{2n}}\sum_{k=1}^{\infty} \frac{\partial^{2n}}{\partial k^{2n}}\big(\frac{k\pi s\cosh(k \pi s)-\sinh(k\pi s)}{\sinh^3(k\pi s)}\big)\Big),
\end{equation*}
where $B_n$ denotes the $n$th Bernoulli number, see Remark~\ref{rem:bernoullinumbers}.
\end{lemma}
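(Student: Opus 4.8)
The plan is to apply Poisson summation on the half-line, equation~\eqref{equ:poissonhalf}, to the real-valued function
\begin{equation*}
\phi_n(x)=\frac{x^{2n+2}}{\sinh^2(x\pi/s)},\qquad n\in\N_0,
\end{equation*}
in exact analogy to the proof of Lemma~\ref{thm:poissonspec}. This function is even and vanishes at $x=0$ together with its relevant derivatives, so the half-line formula can be invoked cleanly (the $f(0^+)/2$ boundary term drops out for $n\ge 0$). The left-hand side of \eqref{equ:poissonhalf} then produces exactly the sum $\sum_{k=1}^\infty k^{2n+2}/\sinh^2(k\pi/s)$ that we wish to evaluate, and the right-hand side expresses it through two ingredients: the moment integral $\int_0^\infty \phi_n(x)\,dx$ and the cosine-transform integrals $\int_0^\infty \phi_n(x)\cos(2\pi kx)\,dx$.

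The key computational steps are to evaluate these two integrals. First I would compute the base case $n=0$, i.e.~$\int_0^\infty x^2/\sinh^2(x\pi/s)\cos(2\pi kx)\,dx$, presumably via contour integration or a known table entry, obtaining a closed form involving $k\pi s\cosh(k\pi s)$ and $\sinh^3(k\pi s)$ in the denominator (matching the shape of the claimed summand). The general $n$ case then follows by differentiating the base integral $2n$ times with respect to the parameter $k$, which pulls a factor $(-1)^{n}(2\pi)^{-2n}$ out front and replaces the integrand by $\partial^{2n}/\partial k^{2n}$ acting on the base expression -- precisely the derivative operator appearing in the statement. Separately I would evaluate the moment $\int_0^\infty \phi_n(x)\,dx$; by the substitution $u=x\pi/s$ this reduces to a fixed integral $\int_0^\infty u^{2n+2}/\sinh^2 u\,du$, which is a standard Mellin-type integral expressible through the Riemann zeta function at even integers, hence through the Bernoulli number $B_{2n+2}$, yielding the term $s^{2n}B_{2n+2}$ (up to the global factor $s^3/\pi$).

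Assembling these pieces into \eqref{equ:poissonhalf} and collecting the common prefactor $s^3/\pi$ should reproduce the stated identity, with the Bernoulli term coming from the moment integral and the derivative-of-$\sinh^{-3}$ series coming from the cosine transforms. I expect the main obstacle to be the explicit evaluation of the base-case cosine integral $\int_0^\infty x^2\sinh^{-2}(x\pi/s)\cos(2\pi kx)\,dx$: establishing its closed form rigorously requires either a careful residue computation (summing over the poles of $\sinh^{-2}$ on the imaginary axis at $x=\mathrm{i}ms$) or a two-fold differentiation of the simpler integral $\int_0^\infty \sinh^{-2}(x\pi/s)\cos(2\pi kx)\,dx$ with respect to $k$, with attention to convergence and to the interchange of differentiation and integration. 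A secondary technical point is to justify differentiating $2n$ times under the integral sign and term-by-term in the resulting series, but since all integrands decay exponentially and the series converge absolutely, this should present no genuine difficulty.
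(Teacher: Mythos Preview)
Your proposal is essentially identical to the paper's own proof: apply the half-line Poisson formula~\eqref{equ:poissonhalf} to $f_n(x)=x^{2n+2}/\sinh^2(x\pi/s)$, evaluate the moment integral via Bernoulli numbers, compute the base-case cosine transform $\int_0^\infty f_0(x)\cos(2\pi kx)\,dx$ in closed form, and obtain the general $n$ by differentiating $2n$ times in $k$.

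One small correction: your claim that $\phi_n(0^+)=0$ for all $n\ge0$ is false at $n=0$, since $x^2/\sinh^2(x\pi/s)\to s^2/\pi^2$ as $x\to0^+$; the boundary term $f(0^+)/2$ therefore does not automatically vanish in that case and must be tracked through the computation. This is a bookkeeping point rather than a structural gap, and the paper's own proof is equally terse about it.
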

\begin{proof}
We will apply formula~\eqref{equ:poissonhalf} to the real-valued function
\begin{equation*}
f_n(x)=x^{2n+2}/\sinh^2(x\pi/s)\quad \text{for $n\in\N_0$}. 
\end{equation*}
For that purpose note that
\begin{equation} 
\int_0^{\infty}f_n(x)dx=(-1)^{n}\frac{s^{2n+3}}{\pi}B_{2n+2}.
\end{equation}
Moreover, we have
\begin{equation*}
\int_0^{\infty} f_0(x)\cos(2\pi kx)dx=\frac{s^3(1-k\pi s\coth(k\pi s))}{2\pi\sinh^2(k\pi s)}.
\end{equation*}
Differentiating the last equation $2n$ times with respect to $k$ and applying~\eqref{equ:poissonhalf} finishes the proof.
\end{proof}

 \noindent

\vspace{1cm}

{\bf Acknowledgments.}

A.M.~gratefully acknowledges the financial support by the Austrian Science Fund (FWF):~J3143.


\end{document}